\definecolor{Red}{rgb}{0.00, 0.00, 0.00} 
    \newcommand{\Red}{\color{Red}}
\definecolor{Blue}{rgb}{0.00, 0.00, 0.00} 
\definecolor{Green}{rgb}{0.00, 0.00, 0.00}  
\newcommand{\nv}{\boldsymbol}
\newcommand{\1}[1]{{\boldsymbol 1_{\{#1\}}}}
\newcommand{\tbt}[4]{{\left(\!\!\begin{array}{cc} {#1} & {#2} \\ {#3} & {#4} \end{array}\!\!\right)}}
\newtheorem{theorem}{Theorem}
\newtheorem{corollary}{Corollary}
\newtheorem{definition}{Definition}
\newtheorem{lemma}{Lemma}
\newtheorem{proposition}{Proposition}
\newenvironment{proof}[1][Proof]{\textbf{#1.} }{\ \rule{0.5em}{0.5em}}
\newcommand{\F}{\mathcal{F}}
\newcommand{\Px}{ \mathbb{P} }
\newcommand{\Qx}{ \mathbb{Q} }
\newcommand{\Ex}{ \mathbb{E} }
\newcommand{\Gx}{\mathbb{G}}
\newcommand{\Fx}{\mathbb{F} }
\begin{document}

\title{Default and Systemic Risk in Equilibrium}
\author{Agostino Capponi\thanks{School of Industrial Engineering, Purdue University, West Lafayette, IN, 47906, Email:capponi@purdue.edu} \and
Martin Larsson\thanks{School of Operations Research, Cornell University, Ithaca, NY, 14853, Email: mol23@cornell.edu}}
\date{\today}
\maketitle

\begin{abstract}
We develop a finite horizon continuous time market model, where risk averse investors maximize utility from terminal wealth by dynamically investing in a risk-free money market account, a stock written on a default-free dividend process, and a defaultable bond, whose prices are determined via equilibrium. We analyze the endogenous interaction arising  between the stock and the defaultable bond via the
interplay between equilibrium behavior of investors, risk preferences and cyclicality properties of the default intensity. We find that the equilibrium price of the stock experiences a jump at default, despite that the default event has no causal impact on the dividend process. 
We characterize the direction of the jump in terms of a relation between investor preferences and the cyclicality properties of the default intensity. We conduct a similar analysis for the market price of risk and for the investor wealth process, and determine how heterogeneity of preferences affects the exposure to default carried by different investors.
\end{abstract}
\quad
\section{Introduction}

The default of a systemically important entity can have an impact on the rest of the economy through a number of different mechanisms. For instance, firms that have exposures to the defaulted entity through market transactions, can experience a deterioration in fundamentals driving the value of their assets. Under adverse circumstances this can lead to a domino effect, where the default of one firm causes financial distress on entities with which the firm had business relations. This distress can propagate through the financial system causing a cascading failure, leading in the worst case to the collapse of a significant portion of the system (the recent credit crisis being a clear example). In the context of interbank lending, \cite{giesecke-weber} propose a reduced form contagion model, while \cite{ContMinAmi:2010} and \cite{ContMinAmi:2011} use tools from random graph theory to analyze short term counterparty credit exposures. Dynamic contagion models are considered in~\cite{Rungga:2009}, and more recently in ~\cite{Cvitanic/Ma/Zhang:2010} and \cite{Giesecke/Spiliopoulos/Sowers:2011}.

Alternatively, there may be a purely informational effect, where the default of one firm triggers the market participants to update their perception of the state of the economy. For example, \cite{Dufresne/Goldsein:2011} show that the unexpected default of an individual firm can lead to a market-wide increase in credit spreads, and demonstrate via calibration that the risk premium due to contagion risk may be considerable.

A third possibility is that the sudden shock associated with the default event leads to a re-allocation of wealth as the economy returns to equilibrium. This may in turn cause rapid price changes due to linkages that stem from the equality between supply and demand. The aim of the present paper is to study this mechanism in a continuous time financial model, including default risk, where prices are determined endogenously in equilibrium.

While models of economic equilibrium have been studied for a long time, it is only recently that fully dynamic stochastic models of equilibrium have received significant attention. \cite{Dumas:1989} considers a dynamic equilibrium model with two investors, and characterizes the equilibrium behavior of the wealth allocation and risk-free rate, assuming that the stock returns are specified exogenously. \cite{Chaba:2011} considers a similar economy, but allows for the possibility of portfolio constraints, and analyzes  cyclicality properties of market price of risk and stock return volatilities. \cite{Bhamra/Uppal:2009} consider a continuous time economy populated by two power utility agents with heterogenous beliefs and preferences, and give closed form expressions for consumption policies, portfolio policies, and asset prices. The same model as in \cite{Bhamra/Uppal:2009} is considered by \cite{CviNapp:2011} and~\cite{CviMal:2010}, who extend the results by Bhamra and Uppal to the case of an arbitrary number of agents, including an asymptotic analysis for large time horizons. ~\cite{CviMal:2011} provide decompositions into myopic and non-myopic components for market price of risk, stock volatility, and hedging strategies. In the same economic model, \cite{Wang:1996} studies how investor preferences affect the term structure of interest rates.

The literature on dynamic equilibrium models, including the papers mentioned above, has been concerned primarily with models where equilibrium prices have continuous paths. This means that dramatic and sudden changes, such as crisis events or major defaults, are absent---and indeed these papers have focused on other economic phenomena. An exception is \cite{Hasler:2011}, which considers a Lucas economy with multiple defaults, where the default intensities are constant.


In the present paper we study a finite horizon continuous time model, where rational investors maximize utility from terminal wealth. Three securities are liquidly and dynamically traded: a money market represented by a locally risk-free security,
i.e. investors can borrow from or lend to each other without default, a stock representing shares of the aggregate endowment, and a defaultable bond which represents the corporate bond index (for example, the Dow Jones corporate bond index). 
We assume a constant recovery model, in which case the default of the bond index is interpreted as the default of one (or more) of the index bonds, which reduces the total payment of the index. The intensity of the defaultable bond may, but need not, depend on the dividend process.



As we demonstrate in the present paper, introducing a defaultable security in the economy leads to new insights regarding the behavior of securities prices, market price of risk, and wealth allocation. For instance, we find that the equilibrium price of the stock typically jumps when default occurs, despite the fact that the underlying dividend process is entirely unaffected by the default event. Moreover, the direction of the jump (up or down) depends in a non-trivial way on the interplay between investor preferences and the cyclicality properties of the default intensity. In particular, we show that upward jumps in the stock price are possible if, roughly speaking, the default intensity is sufficiently counter-cyclical. The precise statement is given in Theorem~\ref{T:sj}. We also show that a similar analysis, with similar conclusions, can be carried out for the wealth processes of individual investors, see Section~\ref{S:W}. In this connection, we investigate how heterogeneity of preferences affects the exposure to the default carried by the different investors.

Due to the possibility of default, there are two sources of risk in our model: diffusion risk and jump risk. Using techniques from the theory of filtration expansions, which has a long and successful history in credit risk modeling, we are able to guarantee market completeness, even in the presence of jumps, see also \cite{BielJean:2006} and \cite{BielJeanRu:2006} for a detailed analysis of market completeness and replication strategies in reduced form models of credit risk.
This allows us to identify a unique market price of risk process, corresponding to diffusion risk, and default risk premium process, corresponding to jump risk. It turns out that the two quantities are intimately linked, see Proposition~\ref{P:drpsen}. By means of a quite delicate mathematical analysis, these quantities are studied in the case of constant interest rate and default intensity.

The most natural interpretation of the phenomena we study is as a form of systemic risk, arising in an economy consisting of securities
carrying both market and default risk.  
While systemic risk effects generated from equilibrium models have been studied, for instance in~\cite{Allen:2000} and~\cite{Freixas:2010}, these papers use static discrete time models, exclusive of default, where the focus is on characterizing optimal risk sharing across banks with different credit profiles, or belonging to different geographical sectors. Differently from most research efforts, our model exhibits an endogenous interaction between the stock and the defaultable bond, which arises via the interplay between equilibrium behavior of the investors and their risk preferences.

The rest of the paper is organized as follows. Section \ref{sec:model} introduces the economic model. Section \ref{sec:mpr} analyzes
the market price of risk in equilibrium along with its behavior at the default event. Section \ref{S:Sp} characterizes the behavior of the
equilibrium stock price at default via a relation between cyclicality properties of short rate and default intensity, and investor preferences.
Section \ref{S:W} performs a similar analysis for the wealth process of a risk-averse agent, and, in the case of a power utility investor, provides monotonicity relations between the size of the jump and the level of risk aversion. Section \ref{sec:concl} concludes the paper.
The proofs of the necessary lemmas are deferred to the appendix.
\section{The Model}\label{sec:model}

\subsection{The Probabilistic Model}

Let $(\Omega,\mathcal F,\Px)$ be a complete probability space, supporting a standard Brownian motion $B=(B_t)_{0\leq t\leq T}$. Let $\Fx=(\F_{t})_{0\leq t\leq T}$ be the augmented filtration generated by $B$, which satisfies the usual hypotheses of completeness and right continuity. We use a standard construction (also called Cox construction) of the default time $\tau$, based on doubly stochastic point processes, using a given nonnegative $\mathbb F$~adapted intensity process $\lambda=(\lambda_t)_{0\leq t\leq T}$. To this end, we assume the existence of an exponentially distributed random variable $\chi$ defined on the probability space $(\Omega,\F, \Px)$, independent of the process $B$.
The default time $\tau$ is then defined as
$$
\tau = \inf\{ t\geq 0: \int_0^t \lambda_s ds \geq \chi\}.
$$
The market filtration $\mathbb G = (\mathcal G_t)_{0\leq t\leq T}$, which describes the information available to investors, is given by
$$
\mathcal G_t  = \bigcap_{u>t} \mathcal F_u \vee \sigma(\tau\wedge u).
$$
That is, it contains all information in $\mathcal F_t$, together with the knowledge of whether $\tau$ has occurred or not, and has been made right-continuous. It is a well-known result (see e.g.~\cite{Bielecki/Rutkowski:2001}, Section~6.5 for details) that the process
$$
M_t = \1{\tau\leq t} - \int_0^{t\wedge\tau}\lambda_sds
$$
is a ${\Gx}$-martingale under $\Px$. In other words, $\lambda$ is the \emph{default intensity} (or hazard rate) of~$\tau$.

An {important consequence of the previous construction is that Hypothesis~(H) holds, i.e.~every $\mathbb F$~martingale remains a $\mathbb G$~martingale, see \cite{Bielecki/Rutkowski:2001}. It then follows from a result by Kusuoka (Theorem~\ref{T:mgr} in Appendix~\ref{AP:fe}) that every square integrable $\mathbb G$~martingale may be represented as a stochastic integral with respect to $M$ and~$B$.

\subsection{The market model}
We consider a market model, which is an extension of the standard setting in~\cite{Cvitanic/Malamud:2010}. We assume that there is an underlying dividend process $D=(D_t)_{0\leq t\leq T}$ with dynamics
\begin{equation}
\frac{dD_t}{D_t} = \mu^D(D_t) dt + \sigma^{D}(D_t) dB_t, \qquad D_0>0.
\label{eq:divid}
\end{equation}
It is assumed that $\mu^D:\mathbb R_+ \to\mathbb R$ and $\sigma^D:\mathbb R_+\to \mathbb R_+$ are such that a strictly positive, strong solution exists. We also assume that $\mu^D$ and $\sigma^D$ are infinitely differentiable on~$(0,\infty)$, and that $\sigma^D>0$.

There are two risky assets in the economy, a stock which carries market risk, and a defaultable bond which carries default risk. At terminal time $T$, the stock pays a terminal dividend $D_T$, while the defaultable bond pays a terminal dividend $P_T$. The latter is given by
$$
P_T = \1{\tau>T} + \varepsilon\1{\tau\leq T}.
$$
Here $0<\varepsilon<1$ is a constant recovery value paid at time $T$ in case default happens at or before $T$. We assume that $\varepsilon$ is deterministic, although many calculations would still be valid as long as $\varepsilon$ is $\mathcal F_T$-measurable. Neither the stock, nor the defaultable bond generates any intermediate dividends. We also assume the existence of a locally risk free money-market account with interest rate $r=(r_t)_{0\leq t\leq T}$. Finally, we assume that the default intensity $\lambda_t$ and interest rate $r_t$ are of the form
$$
\lambda_t = \lambda(D_t) \qquad\text{and}\qquad r_t = r(D_t)
$$
for deterministic functions $\lambda$ and $r$. The same assumption has also been used by~\cite{Cvitanic/Malamud:2010} for the interest rate.

In our model, both the stock and the defaultable bond 
are positive net supply assets. In contrast, the zero money-market account is assumed to be available in zero net supply.

The market price at time $t$ of the stock is denoted by $S_t$, and that of the defaultable bond by $P_t$. These processes are determined in equilibrium, and their dynamics is of the form
\begin{align*}
\frac{dS_t}{S_{t-}} &= \mu^{S}_t dt + \sigma^{S}_t dB_t + \rho^{S}_t dM_t \\
\frac{dP_t}{P_{t-}} &= \mu^{P}_t dt + \sigma^{P}_t dB_t + \rho^{P}_t dM_t.
\label{eq:priceprocess}
\end{align*}
The existence of such representations follows from Theorem~\ref{T:mgr} together with the fact that in equilibrium, both $S_t$ and $P_t$ are semimartingales with absolutely continuous finite variation parts. Furthermore, we conjecture that the matrix
$$
\tbt{\sigma^S_t}{\rho^S_t}{\sigma^P_t}{\rho^P_t}
$$
will be invertible in equilibrium. This immediately implies that the market is complete, via application of Theorem~\ref{T:mgr}. It is then well known, see e.g.~\cite{Cvitanic/Zapatero:2004}, that there exists a unique state-price density process
$$
\xi=(\xi_t)_{0\leq t\leq T}.
$$
The time $t$ price of a payoff $X$ received at time~$T$ is given by $\frac{1}{\xi_t}\Ex[\xi_T X\mid\mathcal G_t]$.

\subsection{The investors}

There are a finite number of investors, indexed by $k$, who optimize expected utility from final consumption. They are all assumed to have identical beliefs given by the historical probability $\Px$, but can have different utility functions $U_k$. These are assumed to be twice continuously differentiable, strictly concave, and satisfy Inada conditions at zero and infinity:
$$
\lim_{x\downarrow 0} U'_k(x) = \infty \qquad \text{and}\qquad \lim_{x\to\infty} U'_k(x) = 0.
$$
Two important measures of risk aversion, which will be used extensively in this paper, are the coefficients of absolute and relative risk aversion, both defined in~\cite{Pratt:1964}. The coefficient of absolute risk aversion is defined as
\begin{equation}
\ell_U(x) = -\frac{d \log U'(x)}{dx} = - \frac{U''(x)}{U'(x)}.
\label{eq:absriskav}
\end{equation}
Pratt related this measure to the agent's risk behavior by showing that an agent with utility $U(x)$ is more risk averse than an agent with utility $V(x)$ if and only if {\Red{$\ell_U(x) > \ell_V(x)$}} for all $x \geq 0$. The coefficient of relative risk aversion is defined as
\begin{equation}
L_U(x) = -\frac{d \log U'(x)}{d \log x} = - \frac{xU''(x)}{U'(x)}.
\label{eq:relriskav}
\end{equation}

The $k$:th investor chooses a dynamic portfolio strategy $\pi_k = (\pi^S_{kt},\pi^P_{kt})_{0\leq t\leq T}$, a $\mathbb G$~predictable and $(S,P)$-integrable process, where $\pi^S_{kt}$ is the proportion of wealth invested in the stock at time $t$, and $\pi^P_{kt}$ is the proportion of wealth invested in the defaultable bond. The remaining wealth is invested in the money market account to make the strategy self-financing. The investor must choose his strategy so that the corresponding wealth process, given by
\begin{equation} \label{eq:Wkt}
\frac{dW_{kt}}{W_{kt-}} = r_t dt + \pi^S_{kt} \Big( \frac{dS_t}{S_{t-}} - r_t dt \Big) + \pi^P_{kt} \Big( \frac{dP_t}{P_{t-}} - r_t dt \Big),
\end{equation}
stays strictly positive for $0\leq t\leq T$. The portfolio strategy $\pi_k$ is chosen to maximize the expected utility
$$
\Ex\left[ U_k(W_{kT} )\right].
$$
Market completeness allows one to use standard duality methods (see~\cite{Cvitanic/Malamud:2010}) to show that the optimal final wealth in equilibrium is given by
\begin{equation} \label{eq:Wkt2}
W_{kT} = I_k(y_k\xi_T),
\end{equation}
where the number $y_k$ is the solution to the budget constraint equation,
$$
\Ex\left[ I_k(y_k \xi_T) \xi_T \right] = W_{k0}.
$$
Moreover, the wealth at times $t<T$ is given by
\begin{equation}\label{eq:Wkexp}
W_{kt} = \frac{\Ex[ \xi_T W_{kT} \mid \mathcal G_t ]}{\xi_t}.
\end{equation}

\subsection{The equilibrium}

We employ the usual notion of equilibrium:

\begin{definition}
The market is said to be in equilibrium if each investor behaves optimally and all the securities markets clear.
\end{definition}

Again by market completeness, standard equilibrium theory, see ~\cite{Const:1982}, shows that security prices coincide with those in an artificial economy populated by a single, representative investor. We denote the corresponding utility function by $U$, and assume that $U$ is twice continuously differentiable, strictly concave, and satisfies Inada conditions at zero and infinity. The state-price density is then given by
\begin{equation}
\xi_T = U'(D_T + P_T).
\label{eq:statepricedens}
\end{equation}
Furthermore,
$$
\xi_t = e^{-\int_0^t r_s ds} Z_t,
$$
where $Z$ is the Radon-Nikodym density process corresponding to the (unique) risk-neutral measure $\Qx$,
$$
Z_t = \frac{d\Qx}{d\Px}\Big|_{\mathcal G_t} = \Ex\Big[ e^{\int_0^T r_sds} \xi_T \mid\mathcal G_t \Big].
$$

Using Equation~(\ref{eq:statepricedens}), the definition of $D_T$ and $P_T$, and Lemma~\ref{L:ce} in Appendix~\ref{AP:fe}, we can separate the state price density into a pre- and post-default component. More precisely, we have
$$
\xi_t = \1{\tau > t} \xi_t^{\textit{pre}} +  \1{\tau \leq t} \xi_t^{\textit{post}},
$$
where
\begin{align}
\nonumber \xi_t^{\textit{post}} &= \Ex \bigg[e^{\int_t^T r_s ds} U'(D_T + \varepsilon) \bigg| \mathcal{F}_t  \bigg]  \\
\nonumber \xi_t^{\textit{pre}} &= \Ex\bigg[\left(1-e^{-\int_t^T \lambda_s ds} \right) e^{\int_t^T r_s ds} U'(D_T + \varepsilon)  \\
& \qquad\qquad\qquad + e^{-\int_t^T \lambda_s ds} e^{\int_t^T r_s ds} U'(D_T + 1) \bigg| \mathcal{F}_t \bigg].
\label{eq:xiprepost}
\end{align}


%

{\bf Remark.} Assume that the intensity $\lambda_t$ is deterministic and, for simplicity, that $r_t\equiv 0$. We then have
$$
\xi_t^{\textit{pre}} = \Px(\tau \leq T | \tau > t)\Ex[ U'(D_T + \varepsilon)\mid\mathcal F_t] + \Px(\tau > T | \tau > t) \Ex[ U'(D_T + 1)\mid\mathcal F_t],
$$
indicating that the pre-default state price density is the weighted average of the state price density in an economy where default will surely happen, and the state price density in a default-free economy. The weights are, respectively, the probability that default will, or will not, take place before $T$, given that it has not occurred up to time $t$.

The equilibrium market price processes are computed using the state price density $\xi_t$. They are given by
\begin{align}
S_t &= \frac{\Ex[ \xi_T D_T \mid \mathcal G_t ]}{\xi_t} = \Ex^\Qx\Big[ e^{-\int_t^T r_udu} D_T \mid\mathcal G_t \Big] \label{eq:marketpricesec} \\
P_t &= \frac{\Ex[ \xi_T P_T \mid \mathcal G_t ]}{\xi_t} = \Ex^\Qx\Big[ e^{-\int_t^T r_udu} P_T \mid\mathcal G_t \Big]. \nonumber
\end{align}
Therefore, again relying on Lemma~\ref{L:ce} in Appendix~\ref{AP:fe}, we obtain
\begin{equation}\label{eq:Sprepost1}
S_t = \1{\tau > t} S_t^{\textit{pre}} +  \1{\tau \leq t} S_t^{\textit{post}},
\end{equation}
where
\begin{eqnarray}
\nonumber S_t^{\textit{post}} &=& \frac{1}{\xi^{\textit{post}}_t} \Ex\left[D_T U'(D_T + \epsilon) \bigg| \mathcal{F}_t \right] \\
\nonumber S_t^{\textit{pre}} &=& \frac{1}{\xi^{\textit{pre}}_t} \Ex\left[\left(1-e^{-\int_t^T \lambda_s ds} \right) D_T U'(D_T + \epsilon) +
e^{-\int_t^T \lambda_s ds} D_T U'(D_T + 1)  \bigg| \mathcal{F}_t  \right]. \\
\label{eq:Sprepost2}
\end{eqnarray}

\section{Equilibrium market price of risk} \label{sec:mpr}

In this section we derive expressions for the market price of (diffusion and default) risk, as well as the risk premium of the stock. The risk premium is defined as the excess growth rate of the asset above the risk-free rate, namely $\mu^S_t-r_t$.

By Theorem~\ref{T:mgr} the density process $Z$ associated with the risk-neutral measure has the representation
$$
\frac{dZ_t}{Z_{t-}} = -\theta_t dB_t + \kappa_t dM_t
$$

for some $\mathbb G$~predictable processes $\theta$ and $\kappa$. An application of Girsanov's theorem shows that
$$
B^\Qx_t = B_t + \int_0^t \theta_s ds \qquad\text{and}\qquad M^\Qx_t = M_t - \int_0^{t\wedge\tau} \kappa_s\lambda_sds
$$
are $(\mathbb G,\Qx)$~local martingales, and in particular $B^\Qx$ is $(\mathbb G,\Qx)$~Brownian motion. Note that we can write
$$
M^\Qx_t = \1{\tau\leq t} - \int_0^{t\wedge\tau} \lambda_s(1+\kappa_s)ds,
$$
so that the risk-neutral default intensity is given by $\lambda^\Qx_t = \lambda_t (1+\kappa_t)$. The quantity $\kappa_t$ is called the $\emph{default risk premium}$, and $\theta_t$ is called the \emph{market price of risk}. We fix this notation from now on.

\begin{proposition} \label{P:mpretc}
The market price of risk is given by
$$
\theta_t = \theta^{\textit{pre}}_t\1{\tau\geq t}+\theta^{\textit{post}}_t\1{\tau< t},
$$
where $-\theta^{\textit{pre}}$ is the volatility of $\xi^{\textit{pre}}$, and $-\theta^{\textit{post}}$ is the volatility of $\xi^{\textit{post}}$. The default risk premium is given by
$$
\kappa_t = \frac{\xi^{\textit{post}}_t}{\xi^{\textit{pre}}_t} - 1.
$$
The risk premium associated with the stock, or the \emph{equity risk premium}, is given by
$$
\mu^S_t - r_t = \sigma^S_t\theta_t - \left( \frac{S^{\textit{post}}_t}{S^{\textit{pre}}_t} - 1 \right)\left( \frac{\xi^{\textit{post}}_t}{\xi^{\textit{pre}}_t} - 1 \right) \lambda_t \1{\tau\geq t}.
$$
\end{proposition}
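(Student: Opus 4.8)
The plan is to compute the dynamics of the state-price density $\xi$ in two ways and match coefficients. On the one hand, since $\xi_t = e^{-\int_0^t r_s\,ds}Z_t$ and $Z$ has the representation recalled above (a consequence of Theorem~\ref{T:mgr}), It\^o's formula gives
\begin{equation*}
\frac{d\xi_t}{\xi_{t-}} = -r_t\,dt - \theta_t\,dB_t + \kappa_t\,dM_t,
\end{equation*}
with unknown $\mathbb G$-predictable integrands $\theta,\kappa$. On the other hand, \eqref{eq:xiprepost} provides the decomposition $\xi_t = \1{\tau>t}\xi^{\textit{pre}}_t + \1{\tau\le t}\xi^{\textit{post}}_t$ into two continuous, $\mathbb F$-adapted processes, so the first step is to compute the $\mathbb F$-dynamics of $\xi^{\textit{post}}$ and $\xi^{\textit{pre}}$. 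Observing from \eqref{eq:xiprepost} that $e^{\int_0^t r_s\,ds}\xi^{\textit{post}}_t = \Ex[e^{\int_0^T r_s\,ds}U'(D_T+\varepsilon)\mid\mathcal F_t]$ is an $\mathbb F$-martingale, the Brownian martingale representation theorem ($\mathbb F$ being generated by $B$) yields a process $\theta^{\textit{post}}$ with $d\xi^{\textit{post}}_t = \xi^{\textit{post}}_t(-r_t\,dt - \theta^{\textit{post}}_t\,dB_t)$; that is, $-\theta^{\textit{post}}$ is the volatility of $\xi^{\textit{post}}$. Likewise, \eqref{eq:xiprepost} shows that $\xi^{\textit{pre}}_t - \xi^{\textit{post}}_t = e^{\int_0^t(\lambda_s - r_s)\,ds}N_t$ with $N_t = \Ex[e^{\int_0^T(r_s-\lambda_s)\,ds}(U'(D_T+1)-U'(D_T+\varepsilon))\mid\mathcal F_t]$ an $\mathbb F$-martingale, so martingale representation and It\^o's formula give $d\xi^{\textit{pre}}_t = \xi^{\textit{pre}}_t(\mu^{\textit{pre}}_t\,dt - \theta^{\textit{pre}}_t\,dB_t)$ for some drift $\mu^{\textit{pre}}$ and diffusion coefficient $-\theta^{\textit{pre}}$.

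The second step is to substitute these into $\xi_t = \1{\tau>t}\xi^{\textit{pre}}_t + \1{\tau\le t}\xi^{\textit{post}}_t$ and expand by It\^o's formula. Since $\xi^{\textit{pre}},\xi^{\textit{post}}$ are continuous they contribute no covariation with the finite-variation process $\1{\tau\le\cdot}$, and using $d\1{\tau\le t} = dM_t + \lambda_t\1{\tau\ge t}\,dt$ one obtains an expansion of $d\xi_t$ with explicit $dB_t$, $dM_t$ and $dt$ parts. Matching these against the abstract SDE above---and using that $\xi_{t-} = \1{\tau\ge t}\xi^{\textit{pre}}_t + \1{\tau<t}\xi^{\textit{post}}_t$, together with the fact that the $dM_t$-coefficient need only be matched at $\tau$, where $\xi_{\tau-} = \xi^{\textit{pre}}_\tau$---the $dB_t$ terms give $\theta_t = \theta^{\textit{pre}}_t\1{\tau\ge t} + \theta^{\textit{post}}_t\1{\tau<t}$, while the $dM_t$ term gives $\kappa_t\xi^{\textit{pre}}_t = \xi^{\textit{post}}_t - \xi^{\textit{pre}}_t$, i.e.\ $\kappa_t = \xi^{\textit{post}}_t/\xi^{\textit{pre}}_t - 1$ (taken as the continuous, hence $\mathbb G$-predictable, representative). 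The $dt$ terms additionally force $\mu^{\textit{pre}}_t = -r_t - \kappa_t\lambda_t$, which serves as a consistency check.

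For the equity risk premium, the pricing formula \eqref{eq:marketpricesec} shows that $\xi_t S_t = \Ex[\xi_T D_T\mid\mathcal G_t]$ is a $\mathbb G$-martingale. Writing $dS_t/S_{t-} = \mu^S_t\,dt + \sigma^S_t\,dB_t + \rho^S_t\,dM_t$ and applying It\^o's product rule to $\xi_t S_t$, the quadratic covariation contributes $-\theta_t\sigma^S_t\,dt$ from the Brownian parts and $\kappa_t\rho^S_t(dM_t + \lambda_t\1{\tau\ge t}\,dt)$ from the common jump at $\tau$; equating the finite-variation part of $d(\xi_t S_t)$ to zero yields
\begin{equation*}
\mu^S_t - r_t = \sigma^S_t\theta_t - \kappa_t\rho^S_t\lambda_t\1{\tau\ge t}.
\end{equation*}
Finally, exactly as for $\kappa$, the decomposition \eqref{eq:Sprepost1} shows that the jump of $S$ at $\tau$ equals $S^{\textit{post}}_\tau - S^{\textit{pre}}_\tau = \rho^S_\tau S^{\textit{pre}}_\tau$, so that one may take $\rho^S_t = S^{\textit{post}}_t/S^{\textit{pre}}_t - 1$; inserting this and the formula for $\kappa_t$ gives the claimed expression. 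I expect the main obstacle to be not any single computation but the careful bookkeeping around the pre-/post-default split: reading off the $\mathbb F$-dynamics of $\xi^{\textit{pre}}$ from \eqref{eq:xiprepost}, and extracting the $\mathbb G$-predictable integrands $\theta$, $\kappa$, $\rho^S$---which are only pinned down where $dB$ and $dM$ actually charge---while consistently tracking the distinction between $\1{\tau\ge t}$ and $\1{\tau>t}$.
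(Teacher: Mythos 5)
Your proof is correct and follows essentially the same route as the paper: the identifications $\kappa_t = \xi^{\textit{post}}_t/\xi^{\textit{pre}}_t - 1$, $\rho^S_t = S^{\textit{post}}_t/S^{\textit{pre}}_t - 1$ and $\theta_t = \theta^{\textit{pre}}_t\1{\tau\geq t}+\theta^{\textit{post}}_t\1{\tau< t}$ are exactly what the paper extracts from its Lemma~\ref{L:jp}, whose content you simply re-derive inline by applying It\^o's formula to the pre-/post-default decomposition of $\xi$. The only cosmetic difference is in the risk-premium step: the paper passes to the risk-neutral measure via Girsanov and uses that the discounted stock is a $\Qx$-martingale, whereas you stay under $\Px$ and kill the drift of $\xi_t S_t$ by the product rule --- the Girsanov drift correction and your covariation term $-\theta_t\sigma^S_t\,dt + \kappa_t\rho^S_t\lambda_t\1{\tau\geq t}\,dt$ are the same object.
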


\begin{proof}
The assertions concerning $\theta$ and $\kappa$ follow from Lemma~\ref{L:jp} and the definition of $\theta_t$ and $\kappa_t$, since $\xi_t = e^{-\int_0^t r_s ds}Z_t$. Let us establish the expression for the risk premium. The relations between $B$ and $B^\Qx$, respectively $M$ and $M^\Qx$, together with the $\Px$-dynamics of the stock price yield
$$
\frac{dS_t}{S_{t-}} = \left[ \mu^S_t - \sigma^S_t\theta_t + \rho^S_t \kappa_t\lambda_t\1{\tau\geq t} \right] dt + \sigma^S_t dB^\Qx_t + \rho^S_t dM^\Qx_t.
$$
The drift term equals $r_t dt$ since the discounted stock price is a martingale under $\Qx$. The proof follows by substituting the expressions for $\kappa_t$ and $\rho^S_t$ into the above equation (the latter follows from Lemma~\ref{L:jp}.)
\end{proof}

{\bf Remark.}
The risk premium can alternatively be expressed in terms of the risk-neutral default intensity $\lambda^\Qx_t$, using that $\lambda^\Qx_t = \lambda_t (1+\kappa_t)$. The result is
$$
\mu^S_t - r_t = \sigma_t\theta_t - \left( \frac{S^{\textit{post}}_t}{S^{\textit{pre}}_t} - 1 \right)\left( 1 - \frac{\xi^{\textit{pre}}_t}{\xi^{\textit{post}}_t} \right) \lambda^\Qx_t \1{\tau\geq t}.
$$

It is clear from the definition of $\xi^{\textit{pre}}_t$ and $\xi^{\textit{post}}_t$ that we always have $\xi^{\textit{pre}}_t \leq \xi^{\textit{post}}_t$.
{\Red The contribution to the equity risk premium coming from default risk therefore has the same sign as $S^{\textit{pre}}_t - S^{\textit{post}}_t$. This quantity is minus the size of the jump in the stock price, \emph{were default to happen at time $t$}. In particular, if the stock price jumps down at default, then the investors \emph{require} a premium for holding the stock, as they want to be compensated for the loss incurred upon default. On the other hand, if the stock jumps up at default, then it becomes an attractive security to hold, and therefore the investors are willing to \emph{pay} a premium for holding it. We will study the sign of the jump in more detail in Section~\ref{S:Sp}; suffice it to say here that positive price jumps, while atypical, are indeed possible.}

There is an interesting relationship between the sensitivity of $\kappa_t$ with respect to changes in the level of the dividend process, and the market price of diffusion risk. To state the result, first observe that the Markovian structure allows us to write
$$
\kappa_t = \kappa(t,D_t)
$$
for some measurable function $\kappa(t,x)$. We now have

\begin{proposition} \label{P:drpsen}
The function $\kappa$ is differentiable, and the derivative $\kappa_x=\frac{\partial \kappa}{\partial x}$ is given by
$$
\kappa_x(t,D_t) = - \frac{1}{D_t \sigma^D(D_t)} \frac{\xi^{\textit{post}}_t}{\xi^{\textit{pre}}_t} \left(\theta^{\textit{post}}_t -  \theta^{\textit{pre}}_t \right)
$$
\end{proposition}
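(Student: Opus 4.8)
The plan is to combine the identity $\kappa_t = \xi^{\textit{post}}_t/\xi^{\textit{pre}}_t - 1$ from Proposition~\ref{P:mpretc} with the Markovian structure of the model. Since $r_t = r(D_t)$, $\lambda_t = \lambda(D_t)$ and $D$ is a time-homogeneous diffusion, the conditional expectations in~(\ref{eq:xiprepost}) are deterministic functions of $(t,D_t)$: write $\xi^{\textit{post}}_t = f(t,D_t)$ and $\xi^{\textit{pre}}_t = g(t,D_t)$, so that $\kappa(t,x) = f(t,x)/g(t,x) - 1$. The first step is to show that $f$ and $g$ are $C^{1,2}$ on $(0,T)\times(0,\infty)$, whence $\kappa$ is differentiable in $x$ and Itô's formula applies. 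This is where the smoothness hypotheses on $\mu^D,\sigma^D$ (and on $r,\lambda$) enter: $f$ and $g$ are Feynman--Kac representations of solutions of the associated backward parabolic equation, which is uniformly parabolic on compact subsets of $(0,\infty)$ because $\sigma^D>0$, so interior Schauder estimates yield the required regularity; alternatively one differentiates under the expectation using differentiability of the stochastic flow $x\mapsto D^{t,x}$ together with suitable integrability estimates. I expect this regularity step to be the main obstacle and the one requiring the most care.

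Granting the regularity, apply Itô's formula to $f(t,D_t)$ using $dD_t = D_t\mu^D(D_t)\,dt + D_t\sigma^D(D_t)\,dB_t$: the $dB_t$-coefficient of $d\xi^{\textit{post}}_t$ equals $f_x(t,D_t)\,D_t\sigma^D(D_t)$. Dividing by $\xi^{\textit{post}}_t = f(t,D_t)$ and comparing with the representation $d\xi^{\textit{post}}_t/\xi^{\textit{post}}_t = (\cdots)\,dt - \theta^{\textit{post}}_t\,dB_t$ (recall from Proposition~\ref{P:mpretc} that $-\theta^{\textit{post}}$ is the volatility of $\xi^{\textit{post}}$), and repeating the same computation for $g(t,D_t)$, we obtain
\[
f_x(t,D_t) = -\frac{\theta^{\textit{post}}_t\,\xi^{\textit{post}}_t}{D_t\,\sigma^D(D_t)},
\qquad
g_x(t,D_t) = -\frac{\theta^{\textit{pre}}_t\,\xi^{\textit{pre}}_t}{D_t\,\sigma^D(D_t)}.
\]

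Finally, the quotient rule gives $\kappa_x = f_x/g - (f/g)(g_x/g)$; evaluating at $x = D_t$, substituting the two displayed identities together with $f(t,D_t)=\xi^{\textit{post}}_t$ and $g(t,D_t)=\xi^{\textit{pre}}_t$, and simplifying, the factor $1/(D_t\sigma^D(D_t))$ comes out and the two terms combine into
\[
\kappa_x(t,D_t) = -\frac{1}{D_t\,\sigma^D(D_t)}\,\frac{\xi^{\textit{post}}_t}{\xi^{\textit{pre}}_t}\,\bigl(\theta^{\textit{post}}_t - \theta^{\textit{pre}}_t\bigr),
\]
which is the assertion. A marginally shorter variant avoids introducing $f$ and $g$ separately and applies Itô's formula directly to $\kappa(t,D_t)+1 = \xi^{\textit{post}}_t/\xi^{\textit{pre}}_t$: the $dB_t$-coefficient on the right, computed by the Itô quotient rule from the two volatilities $-\theta^{\textit{post}}$ and $-\theta^{\textit{pre}}$, equals $(\xi^{\textit{post}}_t/\xi^{\textit{pre}}_t)(\theta^{\textit{pre}}_t-\theta^{\textit{post}}_t)$, while on the left it equals $\kappa_x(t,D_t)\,D_t\sigma^D(D_t)$; equating the two gives the formula in one line. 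Either way, the $C^{1,2}$ regularity established in the first step is what makes the argument go through.
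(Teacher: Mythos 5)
Your proposal is correct and follows essentially the same route as the paper: write $\xi^{\textit{pre}},\xi^{\textit{post}}$ as smooth functions of $(t,D_t)$, identify $-\theta^i$ with the volatility $\tfrac{\xi^i_x}{\xi^i}D_t\sigma^D(D_t)$ via It\^o's formula, and apply the quotient rule to $\kappa=\xi^{\textit{post}}/\xi^{\textit{pre}}-1$. The only cosmetic difference is that the paper justifies the regularity step by citing Theorem~6.1 of \cite{Janson/Tysk:2006} rather than your Schauder/flow argument, which serves the same purpose.
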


\begin{proof}
As for $\kappa_t$, the Markovian structure allows us to write $\xi^i_t = \xi^i(t,D_t)$ for $i\in\{\textit{pre}, \textit{post}\}$ and measurable functions $\xi^i(t,x)$. As in the proof of Theorem~\ref{T:mprj} below, we may apply Theorem~6.1 in~\cite{Janson/Tysk:2006} to obtain the smoothness of $\xi^i$, and hence of $\kappa$ since $\kappa=\frac{\xi^{\textit{post}}}{\xi^{\textit{pre}}}-1$ by Proposition~\ref{P:mpretc}. Differentiating this relation yields
$$
\kappa_x = \frac{\xi^{\textit{post}}}{\xi^{\textit{pre}}} \left(\frac{\xi^{\textit{post}}_x}{\xi^{\textit{post}}} - \frac{\xi^{\textit{pre}}_x}{\xi^{\textit{pre}}} \right).
$$
Now, the volatility of a positive $\mathbb F$~adapted semimartingale of the form $u(t,D_t)$ is given by $\frac{u_x}{u}(t,D_t)D_t\sigma^D(D_t)$, as can be seen from It\^o's formula. By Proposition~\ref{P:mpretc},  $\theta^i$ is equal to minus the volatility of $\xi^i$, which yields the result.
\end{proof}

Observe that $\theta^{\textit{post}}_t -  \theta^{\textit{pre}}_t$ is the size of the jump in $\theta$, if default were to occur at time~$t$. Proposition~\ref{P:drpsen} shows in particular that if this quantity is positive, the default risk premium moves in the opposite direction to the dividend: an increase in the dividend process is accompanied by a decrease in the default risk premium, and vice versa. This appears to suggest that, upon default, a risk averse investor who sees an upward jump in the market price of risk, prefers to shift wealth from the risky stock to a default-free bond, giving a sure payoff of $\epsilon$ at maturity. If, on the other hand,  $\theta^{\textit{post}}_t -  \theta^{\textit{pre}}_t$ is negative, the default risk premium moves in the same direction as the dividend.

We proceed to study how the market price of risk $\theta_t$ behaves at default. As we have just seen, this also provides information about the sensitivity of the default risk premium $\kappa_t$ to changes in~$D_t$. The following result unfortunately requires us to assume constant interest rate and constant default intensity---already in this case the analysis is non-trivial (in particular it is much more delicate than for the jump in the stock price.) Extending it to more general $r$ and $\lambda$ is an interesting problem that we leave for future research.

\begin{theorem} \label{T:mprj}
Assume that the interest rate and default intensity are constant. If the representative investor has a strictly decreasing absolute risk aversion, then the market price of risk has a nonnegative jump at $\tau$.
\end{theorem}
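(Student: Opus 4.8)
The plan is to reduce the assertion to the monotonicity in $x$ of a single ratio of conditional expectations, and then to propagate that monotonicity backward in time using that $D$ is a one-dimensional diffusion.

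\emph{Reduction.} For $c>0$ put $v_c(t,x)=\Ex[U'(D_T+c)\mid D_t=x]$. Since $r$ and $\lambda$ are constant, the expressions for $\xi^{\textit{post}}$ and $\xi^{\textit{pre}}$ in~(\ref{eq:xiprepost}) reduce to $\xi^{\textit{post}}_t=e^{r(T-t)}v_\varepsilon(t,D_t)$ and $\xi^{\textit{pre}}_t=e^{r(T-t)}\big(a(t)v_\varepsilon(t,D_t)+b(t)v_1(t,D_t)\big)$, where $a(t)=1-e^{-\lambda(T-t)}$ and $b(t)=e^{-\lambda(T-t)}$ satisfy $a(t),b(t)\ge 0$ and $a(t)+b(t)=1$. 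Recalling (from the proof of Proposition~\ref{P:drpsen}) that the volatility of a positive process of the form $u(t,D_t)$ is $\frac{u_x}{u}(t,D_t)D_t\sigma^D(D_t)$, and that by Proposition~\ref{P:mpretc} $\theta^{\textit{post}}$ (resp.\ $\theta^{\textit{pre}}$) equals minus the volatility of $\xi^{\textit{post}}$ (resp.\ $\xi^{\textit{pre}}$), a direct computation gives
\[
\theta^{\textit{post}}_t-\theta^{\textit{pre}}_t=\frac{b(t)\,v_\varepsilon(t,D_t)}{a(t)\,v_\varepsilon(t,D_t)+b(t)\,v_1(t,D_t)}\;D_t\sigma^D(D_t)\;\partial_x\!\Big(\tfrac{v_1}{v_\varepsilon}\Big)(t,D_t).
\]
The prefactor is strictly positive, so the jump $\theta^{\textit{post}}_\tau-\theta^{\textit{pre}}_\tau$ of $\theta$ at $\tau$ is nonnegative, whatever the value of $(\tau,D_\tau)$, if and only if $q:=v_1/v_\varepsilon$ is nondecreasing in $x$ on $[0,T]\times(0,\infty)$.

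\emph{Where decreasing absolute risk aversion enters, and propagation.} The terminal value is $q(T,x)=U'(x+1)/U'(x+\varepsilon)$; more generally set $\phi(y)=U'(y+1)/U'(y+\varepsilon)>0$, so that $\frac{d}{dy}\log\phi(y)=\ell_U(y+\varepsilon)-\ell_U(y+1)$. Since $\varepsilon<1$ and $\ell_U$ is strictly decreasing this is strictly positive, so $\phi$ is increasing; this is the only point at which the hypothesis is used. Now, by Theorem~6.1 in~\cite{Janson/Tysk:2006} (invoked exactly as in the proof of Proposition~\ref{P:drpsen}), $v_\varepsilon$ and $v_1$ are smooth on $(0,T)\times(0,\infty)$ and solve the backward Kolmogorov equation $\partial_t v+\tfrac12 x^2\sigma^D(x)^2 v_{xx}+x\mu^D(x)v_x=0$, which has no zeroth-order term; moreover $v_\varepsilon>0$ and, $U'$ being decreasing, $v_\varepsilon\le U'(\varepsilon)<\infty$. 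Hence $q$ admits the representation $q(t,x)=\Ex^{\hat\Px}_{t,x}\!\big[\phi(\hat D_T)\big]$, where under $\hat\Px$ the process $\hat D$ is $D$ run under the locally equivalent change of measure with (bounded, strictly positive) density process $v_\varepsilon(s,D_s)/v_\varepsilon(t,D_t)$; equivalently, $\hat D$ is the Doob $h$-transform of $D$ by the space-time harmonic function $v_\varepsilon$, a one-dimensional diffusion on $(0,\infty)$ with drift $x\mu^D(x)+x^2\sigma^D(x)^2(v_\varepsilon)_x/v_\varepsilon$ and diffusion coefficient $x\sigma^D(x)$. Since the change of measure has bounded positive density, $\hat D$ inherits from $D$ the property of being strictly positive and non-exploding on $[0,T]$, and the classical comparison theorem for one-dimensional stochastic differential equations (applied on compact subintervals of $(0,\infty)$) shows that the flow $x\mapsto\hat D^{t,x}_s$ is nondecreasing; thus $\hat D$ is stochastically monotone in its starting point. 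Composing with the increasing function $\phi$ shows that $x\mapsto q(t,x)$ is nondecreasing, as required. (Alternatively, one can differentiate the linear equation satisfied by $q$ once more in $x$: $q_x$ then solves a linear backward parabolic equation whose terminal datum is $\phi'\ge 0$ and whose zeroth-order coefficient enters the Feynman--Kac representation only through a strictly positive exponential factor, so that $q_x\ge 0$.)

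\emph{Main obstacle.} The delicate point is the rigorous execution of the propagation step: one must verify that the $h$-transformed process $\hat D$ is a genuine, non-explosive one-dimensional diffusion confined to $(0,\infty)$, and that the comparison theorem applies although the drift $x\mu^D(x)+x^2\sigma^D(x)^2(v_\varepsilon)_x/v_\varepsilon$ is only locally Lipschitz on $(0,\infty)$---issues resolved by localizing to compact subintervals and using that $D$, and hence $\hat D$, never leaves $(0,\infty)$ in finite time, together with the regularity of $v_\varepsilon$ from~\cite{Janson/Tysk:2006}. This is also precisely where the assumption of constant $r$ and $\lambda$ is needed: it makes $\xi^{\textit{pre}}$ a fixed convex combination $a v_\varepsilon+b v_1$ of two solutions of the backward Kolmogorov equation \emph{without} zeroth-order term, so that the clean reduction above holds and $q$ solves an equation to which the comparison argument applies; for non-constant $\lambda$ a Feynman--Kac killing term appears in $\xi^{\textit{pre}}$ and this structure is lost.
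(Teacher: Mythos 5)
Your proof is correct, and while your reduction coincides with the paper's (the condition $\partial_x(v_1/v_\varepsilon)\ge 0$ is exactly Lemma~\ref{L:mprj2}, which asks that $-u^\varepsilon_x/u^\varepsilon \ge -u^1_x/u^1$, i.e.\ that $-\partial_x\log u^\alpha$ be nonincreasing in $\alpha$), the way you establish that monotonicity is genuinely different from the paper's. The paper sets $v^\alpha=-\partial_x\log u^\alpha$, derives a nonlinear PDE for it, observes that the difference $w=v^\beta-v^\alpha$ solves a \emph{linear} parabolic equation with terminal datum $\ell_U(x+\beta)-\ell_U(x+\alpha)>0$, and concludes $w>0$ by a Feynman--Kac martingale argument (Lemma~\ref{L:mprj3}); because that argument needs the auxiliary diffusion to live in a compact set, the coefficients are first truncated and the general case is recovered through an approximation with interior Schauder estimates (Lemma~\ref{L:mprj4}). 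You instead represent $q=v_1/v_\varepsilon$ as $\Ex^{\hat{\Px}}_{t,x}[\phi(\hat D_T)]$ via the Doob $h$-transform of $D$ by the space-time harmonic function $v_\varepsilon$, note that strictly decreasing absolute risk aversion makes $\phi(y)=U'(y+1)/U'(y+\varepsilon)$ increasing (since $\frac{d}{dy}\log\phi=\ell_U(y+\varepsilon)-\ell_U(y+1)>0$), and conclude by stochastic monotonicity of the one-dimensional diffusion $\hat D$ in its starting point. This buys a one-stage proof with no truncation-and-limit step: the change of measure is automatically a true (closed) martingale, so the only technical burden left is the localization needed for the comparison theorem, which is of exactly the kind the paper already carries out in Lemma~\ref{L:stdom2}. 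Two cosmetic caveats: your $\phi$ clashes with the $\phi$ of equation~(\ref{eq:gdef}) and should be renamed; and since the smoothness of $(v_\varepsilon)_x/v_\varepsilon$ obtained from \cite{Janson/Tysk:2006} and Schauder estimates is interior to $(0,T)\times(0,\infty)$, the comparison should be run on $[t,T-\delta]$ with $\delta\downarrow 0$ handled by path continuity. Neither affects correctness.
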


The rest of this section is devoted to proving Theorem~\ref{T:mprj}. First, let us introduce some notation. For each $\alpha>0$, define the function
$$
u^\alpha(t,x) = \Ex[ U'(D_T + \alpha) \mid D_t=x].
$$
Using, for instance, Theorem~6.1 in \cite{Janson/Tysk:2006}, we deduce that $u^\alpha$ satisfies the PDE
$$
u^{\alpha}_t + \frac{1}{2}x^2 \sigma^D(x)^2 u^{\alpha}_{xx} + x \mu^D(x) u^{\alpha}_x = 0, \qquad u^{\alpha}(T,x) = U'(x+\alpha),
$$
where the subscripts denote partial derivatives. Standard results then imply that $u^\alpha$ has the same degree of smoothness as $\sigma^D$ and $\mu^D$ on $(0,T)\times (0,\infty)$, see e.g.~Theorem~10 in Chapter~3 of~\cite{Friedman:2008}. Since we assume that $\sigma^D$ and $\mu^D$ are infinitely differentiable, the same holds for $u^\alpha$.

\begin{proof}[Proof of Theorem~\ref{T:mprj}]
Due to Lemma~\ref{L:mprj2} below, the theorem will be proved once we establish that the quantity
$$
- \frac{\partial}{\partial x} \log u^\alpha(t,x)
$$
is decreasing in $\alpha$. This is done in two stages: Lemma~\ref{L:mprj3} gives the result when $D$ is bounded, and Lemma~\ref{L:mprj4} then extends this to unbounded $D$.
\end{proof}

\begin{lemma} \label{L:mprj2}
Assume that the interest rate and default intensity are constant. If
$$
-\frac{u^\varepsilon_x(t,x)}{u^\varepsilon(t,x)} > - \frac{u^1_x(t,x)}{u^1(t,x)}
$$
for all $(t,x)\in (0,T]\times \mathbb R_+$, then $\Delta \theta_\tau > 0$ on $\{\tau\leq T\}$.
\end{lemma}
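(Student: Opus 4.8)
The plan is to trace through the definitions so that the size of the jump $\Delta\theta_\tau$ gets expressed purely in terms of $u^\varepsilon$ and $u^1$, and then to read off the sign. First, since $r$ and $\lambda$ are constant, the formulas for $\xi^{\textit{pre}}$ and $\xi^{\textit{post}}$ in~(\ref{eq:xiprepost}) simplify: $\xi^{\textit{post}}_t = e^{r(T-t)}u^\varepsilon(t,D_t)$ and $\xi^{\textit{pre}}_t = e^{r(T-t)}\bigl[(1-e^{-\lambda(T-t)})u^\varepsilon(t,D_t) + e^{-\lambda(T-t)}u^1(t,D_t)\bigr]$. By Proposition~\ref{P:mpretc}, $\theta^{\textit{post}}$ is minus the volatility of $\xi^{\textit{post}}$ and $\theta^{\textit{pre}}$ is minus the volatility of $\xi^{\textit{pre}}$; and by the It\^o-formula computation already recorded in the proof of Proposition~\ref{P:drpsen}, the volatility of a positive semimartingale $v(t,D_t)$ equals $\frac{v_x}{v}(t,D_t)\,D_t\sigma^D(D_t)$. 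Hence I would write
$$
\theta^{\textit{post}}_t = -\frac{u^\varepsilon_x}{u^\varepsilon}(t,D_t)\,D_t\sigma^D(D_t), \qquad
\theta^{\textit{pre}}_t = -\frac{(1-e^{-\lambda(T-t)})u^\varepsilon_x + e^{-\lambda(T-t)}u^1_x}{(1-e^{-\lambda(T-t)})u^\varepsilon + e^{-\lambda(T-t)}u^1}(t,D_t)\,D_t\sigma^D(D_t).
$$

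The jump in $\theta$ at $\tau$ is $\Delta\theta_\tau = \theta^{\textit{post}}_\tau - \theta^{\textit{pre}}_\tau$ (using Proposition~\ref{P:mpretc} and that $\theta$ switches from $\theta^{\textit{pre}}$ to $\theta^{\textit{post}}$ at $\tau$). Since $D_\tau\sigma^D(D_\tau) > 0$ on $\{\tau\le T\}$ by assumption, the sign of $\Delta\theta_\tau$ equals the sign of $\frac{u^{\textit{pre}}_x}{u^{\textit{pre}}} - \frac{u^\varepsilon_x}{u^\varepsilon}$ evaluated at $(\tau,D_\tau)$, where $u^{\textit{pre}} := (1-e^{-\lambda(T-t)})u^\varepsilon + e^{-\lambda(T-t)}u^1$ is a convex combination of $u^\varepsilon$ and $u^1$ with weights depending only on $t$. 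So the real content is an elementary lemma about logarithmic derivatives: if $0<w<1$ is a constant (for fixed $t$) and $a,b$ are positive functions of $x$ with $-a'/a > -b'/b$ pointwise, then $-\frac{wa'+(1-w)b'}{wa+(1-w)b}$ lies strictly between $-a'/a$ and $-b'/b$; in particular it is strictly greater than $-a'/a$. Applying this with $a = u^\varepsilon$, $b = u^1$, $w = 1-e^{-\lambda(T-t)}\in(0,1)$, and using the hypothesis $-u^\varepsilon_x/u^\varepsilon > -u^1_x/u^1$, gives $-\theta^{\textit{pre}}/(D\sigma^D) > -u^\varepsilon_x/u^\varepsilon = -\theta^{\textit{post}}/(D\sigma^D)$, hence $\theta^{\textit{post}}_\tau - \theta^{\textit{pre}}_\tau > 0$, i.e. $\Delta\theta_\tau > 0$ on $\{\tau\le T\}$, as claimed.

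The elementary weighted-logarithmic-derivative inequality is checked by a direct computation: writing $f = -\frac{wa'+(1-w)b'}{wa+(1-w)b}$ and clearing denominators, $f - (-a'/a) = \frac{(1-w)b\,(a'/a\cdot a b - a'b\cdot \text{?})}{\cdots}$ — more cleanly, $(wa+(1-w)b)\bigl(f + a'/a\bigr) = -\bigl(wa'+(1-w)b'\bigr) + (a'/a)\bigl(wa+(1-w)b\bigr) = (1-w)\bigl(\tfrac{a'}{a}b - b'\bigr) = (1-w)b\bigl(\tfrac{a'}{a} - \tfrac{b'}{b}\bigr)$, which is strictly positive since $(1-w)>0$, $b>0$, and $\tfrac{a'}{a} - \tfrac{b'}{b} > 0$ (this is $-u^1_x/u^1 > -u^\varepsilon_x/u^\varepsilon$ rearranged, i.e. the hypothesis). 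Dividing by the positive quantity $wa+(1-w)b$ finishes it. I do not anticipate a serious obstacle here; the only point requiring a little care is bookkeeping the $\1{\tau\ge t}$ versus $\1{\tau<t}$ conventions from Proposition~\ref{P:mpretc} so that the difference $\theta^{\textit{post}}_\tau - \theta^{\textit{pre}}_\tau$ is indeed the correct expression for the jump $\Delta\theta_\tau$ at the (deterministic-intensity, hence totally inaccessible) time $\tau$, and confirming that $u^\varepsilon, u^1 > 0$ with the stated smoothness so that all the It\^o computations and the logarithmic derivatives are legitimate — but $u^\alpha > 0$ follows from $U' > 0$, and the smoothness was recorded just before this lemma.
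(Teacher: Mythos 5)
Your route is the same as the paper's: specialize~(\ref{eq:xiprepost}) to constant $r$ and $\lambda$, read off $\theta^{\textit{post}}$ and $\theta^{\textit{pre}}$ as minus the volatilities of $\xi^{\textit{post}}$ and $\xi^{\textit{pre}}$ via It\^o's formula, and reduce the sign of $\Delta\theta_\tau$ to comparing the logarithmic derivative of $u^\varepsilon$ with that of the convex combination $(1-e^{-\lambda(T-t)})u^\varepsilon + e^{-\lambda(T-t)}u^1$. The paper leaves this last comparison as ``a calculation reveals''; you carry it out, and your algebraic identity
$$
\bigl(wa+(1-w)b\bigr)\Bigl(f+\tfrac{a'}{a}\Bigr) \;=\; (1-w)\,b\,\Bigl(\tfrac{a'}{a}-\tfrac{b'}{b}\Bigr), \qquad f:=-\tfrac{wa'+(1-w)b'}{wa+(1-w)b},
$$
is correct.

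However, the sign analysis built on this identity is inverted in two places, and the errors happen to cancel. The hypothesis $-a'/a>-b'/b$ (with $a=u^\varepsilon$, $b=u^1$) means $a'/a<b'/b$, so the right-hand side above is strictly \emph{negative}, not positive as you assert; your parenthetical ``this is $-u^1_x/u^1>-u^\varepsilon_x/u^\varepsilon$ \dots i.e.\ the hypothesis'' states the lemma's hypothesis backwards. Consequently $f<-a'/a$: the weighted logarithmic derivative is strictly \emph{less} than $-a'/a$, which is consistent with your claim that it ``lies strictly between $-a'/a$ and $-b'/b$'' but contradicts your ``in particular it is strictly greater than $-a'/a$.'' The second inversion is the identification $-\theta^{\textit{post}}/(D\sigma^D)=-u^\varepsilon_x/u^\varepsilon$; since $\theta^{\textit{post}}=-\frac{u^\varepsilon_x}{u^\varepsilon}D\sigma^D$, one has $-\theta^{\textit{post}}/(D\sigma^D)=+u^\varepsilon_x/u^\varepsilon$. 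The correct chain is: hypothesis $\Rightarrow$ $a'/a-b'/b<0$ $\Rightarrow$ $f+a'/a<0$ $\Rightarrow$ $\theta^{\textit{pre}}/(D\sigma^D)=f<-a'/a=\theta^{\textit{post}}/(D\sigma^D)$ $\Rightarrow$ $\Delta\theta_\tau=\theta^{\textit{post}}_\tau-\theta^{\textit{pre}}_\tau>0$. Since the entire content of this lemma is a sign determination, these slips matter; with the two signs repaired your argument is complete and coincides with the paper's.
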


\begin{proof}
It follows from~(\ref{eq:xiprepost}) and the assumption of constant $r$ and $\lambda$ that
$$
\xi^{\textit{post}}_t = e^{r(T-t)}u^\varepsilon(t,D_t)
$$
and
$$
\xi^{\textit{pre}}_t = e^{r(T-t)}\Big( (1-e^{-\lambda(T-t)}) u^\varepsilon(t,D_t) + e^{-\lambda(T-t)} u^1(t,D_t) \Big).
$$
The volatility of a positive $\mathbb F$~adapted semimartingale of the form $u(t,D_t)$ is given by $\frac{u_x}{u}(t,D_t)D_t\sigma^D(D_t)$, as can be seen from It\^o's formula. By Proposition~\ref{P:mpretc} and the above expressions for $\xi^{\textit{pre}}_t$ and $\xi^{\textit{post}}_t$ it then follows that
$$
\theta^{\textit{post}}_t = - \frac{u^\varepsilon_x(t,D_t)}{u^\varepsilon(t,D_t)}D_t\sigma^D(D_t)
$$
and
$$
\theta^{\textit{pre}}_t = - \frac{(1-e^{-\lambda(T-t)}) u^\varepsilon_x(t,D_t) + e^{-\lambda(T-t)} u^1_x(t,D_t)}{(1-e^{-\lambda(T-t)}) u^\varepsilon(t,D_t) + e^{-\lambda(T-t)} u^1(t,D_t)}D_t\sigma^D(D_t).
$$
A calculation using that $u^\varepsilon$ and $u^1$ are strictly positive reveals that $\theta^{\textit{post}}_t > \theta^{\textit{pre}}_t$ if and only if
$$
-\frac{u^\varepsilon_x(t,D_t)}{u^\varepsilon(t,D_t)} > -\frac{u^1_x(t,D_t)}{u^1(t,D_t)}.
$$
The result now follows.
\end{proof}

\begin{lemma} \label{L:mprj3}
Assume that the conditions of Theorem~\ref{T:mprj} are satisfied. Assume also that there is a constant $C>0$ such that $\sigma^D(x)=0$ and $\mu^D(x)=0$ for all $x\notin (C^{-1}, C)$. Then
$$
- \frac{\partial}{\partial x} \log u^\alpha(t,x)
$$
is strictly decreasing in $\alpha$.
\end{lemma}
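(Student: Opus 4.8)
The plan is to fix $0<\alpha_1<\alpha_2$ and prove the equivalent statement that $R_x(t,x)<0$ at every $(t,x)$, where $R(t,x)=u^{\alpha_1}(t,x)/u^{\alpha_2}(t,x)$; indeed, since $u^{\alpha_i}>0$, the inequality $R_x<0$ is exactly $-\partial_x\log u^{\alpha_1}>-\partial_x\log u^{\alpha_2}$, which is the asserted strict monotonicity in $\alpha$. Recall from the discussion preceding the proof of Theorem~\ref{T:mprj} that $u^\alpha$ is smooth and solves the backward equation $\mathcal Lv:=v_t+\tfrac12x^2\sigma^D(x)^2v_{xx}+x\mu^D(x)v_x=0$ with $v(T,x)=U'(x+\alpha)$. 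At the terminal time $R(T,x)=U'(x+\alpha_1)/U'(x+\alpha_2)$, so $\partial_x\log R(T,x)=\ell_U(x+\alpha_2)-\ell_U(x+\alpha_1)$, which is strictly negative because the absolute risk aversion $\ell_U$ is strictly decreasing and $\alpha_1<\alpha_2$; hence $R_x(T,\cdot)<0$ everywhere. The task is to propagate this sign backward in time.

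Since $u^{\alpha_1}$ and $u^{\alpha_2}$ both solve $\mathcal Lv=0$ and $u^{\alpha_2}>0$, the standard computation for a ratio of two solutions of a linear parabolic equation shows that $R$ satisfies
$$
R_t+\tfrac12x^2\sigma^D(x)^2R_{xx}+\Big(x\mu^D(x)+x^2\sigma^D(x)^2\,\tfrac{u^{\alpha_2}_x}{u^{\alpha_2}}(t,x)\Big)R_x=0 ,
$$
with no zeroth-order term. Under the hypothesis of this lemma, $\sigma^D$ and $\mu^D$, and therefore the corrected drift above and the diffusion coefficient $x\sigma^D(x)$, are bounded and vanish off $(C^{-1},C)$, while $u^{\alpha_2}$ and $u^{\alpha_2}_x/u^{\alpha_2}$ are smooth on the relevant region by Theorem~6.1 in~\cite{Janson/Tysk:2006} and Theorem~10, Chapter~3 of~\cite{Friedman:2008}. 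Consequently the Feynman--Kac formula applies to the displayed equation: introducing the diffusion obtained from $D$ by the Doob $h$-transform with $h=u^{\alpha_2}$, namely
$$
d\widehat X_s=\Big(\widehat X_s\mu^D(\widehat X_s)+\widehat X_s^2\sigma^D(\widehat X_s)^2\,\tfrac{u^{\alpha_2}_x}{u^{\alpha_2}}(s,\widehat X_s)\Big)ds+\widehat X_s\sigma^D(\widehat X_s)\,dB_s,\qquad \widehat X_t=x ,
$$
we obtain the representation $R(t,x)=\Ex\big[R(T,\widehat X^{t,x}_T)\big]$.

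The final step is to differentiate this identity in the starting point $x$. The coefficients of the $\widehat X$-equation are $C^1$ in the space variable, so the first variation process $J^{t,x}_s=\partial_x\widehat X^{t,x}_s$ exists, solves the linearized (geometric-type) SDE, and is strictly positive almost surely. Differentiating under the expectation---legitimate because $R(T,\cdot)$ and its derivative are bounded and $J^{t,x}_T\in L^1$---gives $R_x(t,x)=\Ex\big[R_x(T,\widehat X^{t,x}_T)\,J^{t,x}_T\big]$, which is strictly negative since $R_x(T,\cdot)<0$ everywhere and $J^{t,x}_T>0$ a.s. This is the desired conclusion. The main obstacle is the analytic bookkeeping in the middle paragraph: verifying that the ratio $R$ really admits the Feynman--Kac representation against the $h$-transformed diffusion and that one may differentiate through the expectation with the first variation process. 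This is exactly where the boundedness assumption $\sigma^D=\mu^D=0$ off $(C^{-1},C)$ is used, and it is the reason the lemma is proved only for such $D$, the extension to general (unbounded) $D$ being handled separately in Lemma~\ref{L:mprj4}.
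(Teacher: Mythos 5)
Your proof is correct, but it takes a genuinely different route from the paper's. The paper works with the \emph{difference} $w=v^\beta-v^\alpha$ of the logarithmic derivatives $v^\alpha=-\partial_x\log u^\alpha$: it first derives the nonlinear (Burgers-type) PDE for $v^\alpha$, subtracts to get a linear parabolic equation for $w$ with a zeroth-order coefficient $c(t,x)$ built from $v^\alpha+v^\beta$, and then propagates the strict positivity of $w(T,\cdot)=\ell_U(\cdot+\beta)-\ell_U(\cdot+\alpha)$ backward by showing that $Y_t=e^{\int_0^t c(s,X_s)ds}w(t,X_t)$ is a positive martingale along an auxiliary diffusion $X$ confined to $[C^{-1},C]$. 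You instead work with the \emph{ratio} $R=u^{\alpha_1}/u^{\alpha_2}$, observe that it solves a conservation-form equation with no zeroth-order term (the Doob $h$-transform with $h=u^{\alpha_2}$), represent $R(t,x)=\Ex[R(T,\widehat X^{t,x}_T)]$, and conclude $R_x<0$ by differentiating the flow in its starting point via the (strictly positive, stochastic-exponential) first variation process. Both arguments are parabolic comparison arguments and both use the compact-support hypothesis on $\mu^D,\sigma^D$ in exactly the same place---to keep the auxiliary diffusion in a compact set so that Feynman--Kac/martingale claims are honest. What yours buys is a cleaner PDE (no potential term, and the drift correction involves only $u^{\alpha_2}_x/u^{\alpha_2}$ rather than $v^\alpha+v^\beta$) and a direct derivation of strictness from $J^{t,x}_T>0$; the price is the extra machinery of differentiating the stochastic flow under the expectation, whereas the paper only needs It\^o's formula and a bounded-integrand martingale. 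One cosmetic point: you assert that $R(T,\cdot)$ and $R_x(T,\cdot)$ are ``bounded,'' which is not needed (or obviously true) globally---what you actually use is boundedness on the compact set where $\widehat X$ lives, which holds; it would be worth saying so explicitly. With that caveat, the argument is complete and equivalent in strength to the paper's.
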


\begin{proof}
Define $\widetilde u^\alpha = \log u^\alpha$. It can be readily verified that $\widetilde u^\alpha$ satisfies the terminal value problem
\begin{align*}
\widetilde u^\alpha_t + \frac{1}{2}x^2\sigma^D(x)^2 \widetilde u^\alpha_{xx} + x\mu^D(x) \widetilde u^\alpha_x + \frac{1}{2}x^2\sigma^D(x)^2 (\widetilde u^\alpha_x)^2 &= 0, \\
\widetilde u^\alpha(T,x) &= \log U'(x+\alpha).
\end{align*}
Now define $v^\alpha = - \widetilde u^\alpha_x =- \frac{\partial}{\partial x} \log u^\alpha$, and differentiate the above equation with respect to $x$ to see that $v^\alpha$ satisfies the nonlinear PDE
\begin{align*}
v^\alpha_t + \frac{1}{2}x^2\sigma^D(x)^2 v^\alpha_{xx} &+ \left( x\mu^D(x) + \frac{1}{2} [ x^2\sigma^D(x)^2]_x \right) v^\alpha_x  \\
&  + [x\mu^D(x)]_x v^\alpha - \frac{1}{2}[ x^2\sigma^D(x)^2 (v^\alpha)^2 ]_x = 0,
\end{align*}
with terminal condition
$$
v^\alpha(T,x) = -\frac{U''(x+\alpha)}{U'(x+\alpha)} = \ell_U(x+\alpha).
$$
Let us pick $\beta<\alpha$, and define $w = v^\beta - v^\alpha$. We want to prove that $w > 0$. The function $w$ satisfies the terminal value problem
\begin{align} \label{eq:mprj1}
w_t + \frac{1}{2} a(x) w_{xx} + b(t,x) w_x + c(t,x) w  &= 0 \\
\nonumber w (T,x) &=  \ell_U(x+\beta) -  \ell_U(x+\alpha),
\end{align}
where
\begin{align*}
a(x) &= x^2\sigma^D(x)^2 \\
b(t,x) &= x\mu^D(x) + \frac{1}{2} [ x^2\sigma^D(x)^2]_x - \frac{1}{2} x^2\sigma^D(x)^2 (v^\alpha(t,x)+v^\beta(t,x)) \\
c(t,x) &= \big[x\mu^D(x) - \frac{1}{2} x^2\sigma^D(x)^2 (v^\alpha(t,x)+v^\beta(t,x))\big]_x.
\end{align*}
Notice that $w (T,x) > 0$, as we are assuming that the coefficient of absolute risk aversion $\ell_U$ is strictly decreasing. Moreover, the coefficients $a, b$ and $c$ are smooth due to the smoothness of $\mu^D$, $\sigma^D$, $v^\alpha$ and $v^\beta$. The latter functions are smooth since they are the derivatives of the logarithm of the infinitely differentiable functions $u^\alpha$ and $u^\beta$.

Now, let $X=(X_t)_{0\leq t\leq T}$ be the solution to the SDE
$$
dX_t = \sqrt{a(X_t)} dB_t + b(t,X_t) dt, \qquad X_0=D_0.
$$
The smoothness of $a$ and $b$ implies that a unique strong solution exists up to an explosion time, but since $\sigma^D(x)=0$ and $\mu^D(x)=0$ for all $x\notin (C^{-1}, C)$, we have $a(x)=0$ and $b(t,x)=0$ there, so no explosion can occur. Indeed, $C^{-1}\leq X_t \leq C$ holds for $0\leq t\leq T$, almost surely.

Next, define a process $Y=(Y_t)_{0\leq t\leq T}$ by
$$
Y_t =  e^{\int_0^t c(s,X_s) ds}w(t,X_t).
$$
It\^o's formula and the fact that $w$ satisfies~(\ref{eq:mprj1}) show that
$$
dY_t =  e^{\int_0^t c(s,X_s) ds}w_x(t,X_t) \sqrt{a(X_t)} dB_t,
$$
and since $X_t$ remains in a compact set and $a$, $c$ and $w_x$ are continuous, the integrand in front of $dB_t$ is bounded. Therefore $Y$ is a martingale, and its final value is $Y_T=e^{\int_0^T c(s,X_s) ds}w(T,X_T) > 0$ due to the boundary condition of $w$. We deduce that $Y_t>0$ for every $t$ almost surely, and hence that $w > 0$, as desired.
\end{proof}

\begin{lemma} \label{L:mprj4}
Assume that the conditions of Theorem~\ref{T:mprj} are satisfied. Then
$$
- \frac{\partial}{\partial x} \log u^\alpha(t,x)
$$
is nonincreasing in $\alpha$.
\end{lemma}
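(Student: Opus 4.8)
The plan is to deduce Lemma~\ref{L:mprj4} from Lemma~\ref{L:mprj3} by a truncation-and-approximation argument: cut off the coefficients of the dividend process so that they vanish outside a large compact subset of $(0,\infty)$, apply Lemma~\ref{L:mprj3} to the truncated model, and then let the truncation disappear. Since a pointwise limit of functions that are \emph{strictly} decreasing in $\alpha$ is still nonincreasing in $\alpha$, this yields exactly the (non-strict) conclusion claimed.

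Concretely, for each integer $n$ large enough that $D_0\in(n^{-1},n)$, I would fix a smooth bump function $\phi_n:(0,\infty)\to[0,1]$ with $\phi_n\equiv1$ on $[n^{-1},n]$ and $\mathrm{supp}\,\phi_n\subset((2n)^{-1},2n)$, and set $\sigma^D_n=\phi_n\sigma^D$ and $\mu^D_n=\phi_n\mu^D$. These are infinitely differentiable, vanish outside $((2n)^{-1},2n)$, and agree with $\sigma^D,\mu^D$ on $[n^{-1},n]$. Let $D^{(n)}$ be the corresponding dividend process, i.e.\ the solution of~(\ref{eq:divid}) with $\sigma^D,\mu^D$ replaced by $\sigma^D_n,\mu^D_n$; exactly as in the proof of Lemma~\ref{L:mprj3}, its coefficients vanish near $0$ and $\infty$, so it is trapped in $[(2n)^{-1},2n]$, strictly positive, and non-exploding. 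Put $u^{\alpha,n}(t,x)=\Ex[U'(D^{(n)}_T+\alpha)\mid D^{(n)}_t=x]$. Since $\sigma^D_n,\mu^D_n$ satisfy the hypotheses of Lemma~\ref{L:mprj3} with $C=2n$ and the representative investor is unchanged, Lemma~\ref{L:mprj3} applied to the $n$-th model shows that $\alpha\mapsto-\partial_x\log u^{\alpha,n}(t,x)$ is strictly decreasing, for every fixed $(t,x)$.

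The heart of the argument is to show that $u^{\alpha,n}\to u^\alpha$ and $u^{\alpha,n}_x\to u^\alpha_x$ as $n\to\infty$, locally uniformly on $(0,T)\times(0,\infty)$. Convergence of the values is soft: driving $D$ and $D^{(n)}$ by the same Brownian motion from a common point $(t,x)$ with $x\in(n^{-1},n)$, pathwise uniqueness and $\phi_n\equiv1$ on $(n^{-1},n)$ force $D$ and $D^{(n)}$ to coincide up to $\rho_n$, the first exit time of $D$ from $(n^{-1},n)$; since $D$ is strictly positive and non-exploding on $[0,T]$ we have $\Px(\rho_n\le T)\to0$, and since $U'$ is positive and decreasing, $0<U'(D_T+\alpha)\le U'(\alpha)$ and likewise for $D^{(n)}$, so $|u^{\alpha,n}(t,x)-u^\alpha(t,x)|\le 2U'(\alpha)\,\Px(\rho_n\le T)\to0$. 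Convergence of the \emph{derivatives} is the delicate point, and this is where I expect the real difficulty to lie. Here I would observe that $u^{\alpha,n}$ solves $u^{\alpha,n}_t+\tfrac12 x^2\sigma^D_n(x)^2u^{\alpha,n}_{xx}+x\mu^D_n(x)u^{\alpha,n}_x=0$ with terminal data $U'(x+\alpha)$, that on every compact subset of $(0,T)\times(0,\infty)$ these coefficients coincide, for all large $n$, with the fixed, smooth, locally uniformly elliptic coefficients $x^2\sigma^D(x)^2$ and $x\mu^D(x)$, and that $0<u^{\alpha,n}\le U'(\alpha)$ uniformly in $n$. Interior parabolic Schauder estimates (see e.g.\ \cite{Friedman:2008}) then give bounds on $u^{\alpha,n}$ in $C^{1+\gamma}$, uniform in $n$, on compact subsets; by a standard compactness argument (Arzel\`a--Ascoli) every subsequence has a further subsequence converging in $C^1_{\mathrm{loc}}$, and since the limit must agree with the pointwise limit $u^\alpha$, the whole sequence converges, so in particular $u^{\alpha,n}_x\to u^\alpha_x$ locally uniformly.

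To conclude, fix $\beta<\alpha$. For each large $n$ and each $(t,x)$ in the domain, Lemma~\ref{L:mprj3} gives $-u^{\beta,n}_x/u^{\beta,n}>-u^{\alpha,n}_x/u^{\alpha,n}$; letting $n\to\infty$ and using $u^\gamma>0$ together with the convergence above yields $-\partial_x\log u^\beta(t,x)\ge-\partial_x\log u^\alpha(t,x)$. As $\beta<\alpha$ are arbitrary (and the case $t=T$ is immediate from the terminal value $\ell_U(x+\alpha)$, which is decreasing in $\alpha$ by strict monotonicity of $\ell_U$), the map $\alpha\mapsto-\partial_x\log u^\alpha(t,x)$ is nonincreasing, as claimed. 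Strict monotonicity is exactly what is lost in the passage to the limit, which accounts for the weaker conclusion here than in Lemma~\ref{L:mprj3}; and, as noted, the one genuinely non-routine step is the uniform interior regularity needed to upgrade $u^{\alpha,n}\to u^\alpha$ to convergence of the spatial derivatives.
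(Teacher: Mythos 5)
Your proposal is correct and follows essentially the same route as the paper: truncate the coefficients of $D$ so that Lemma~\ref{L:mprj3} applies, obtain the strict inequality for each truncated model, prove $u^{\alpha,n}\to u^\alpha$ via pathwise uniqueness on the event that $D$ stays in the truncation window, upgrade to convergence of $u^{\alpha,n}_x$ by interior Schauder estimates on subdomains, and pass to the limit (losing strictness). The only cosmetic differences are your explicit bump-function construction of the truncated coefficients and the Arzel\`a--Ascoli detour, where the paper directly cites the corollary of Theorem~15 in Chapter~3 of \cite{Friedman:2008}.
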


\begin{proof}
Fix $\beta>\alpha$. The goal is to show that $-u^\alpha_x/u^\alpha \geq -u^\beta_x/u^\beta$. For each $n\in\mathbb N$, let $\mu^n$ and $\sigma^n$ be infinitely differentiable and coincide with $\mu^D$, respectively $\sigma^D$, on $[n^{-1},n]$, while being zero outside the interval $[(n+1)^{-1},n+1]$. Denote by $D^n$ the solution to
$$
\frac{dD^n_t}{D^n_t} = \mu^n(D^n_t) dt + \sigma^n(D^n_t) dB_t, \qquad D^n_0=D_0,
$$
and define $u^{\alpha,n}(t,x) = \Ex\left[ U'(D^n_T + \alpha) \mid D^n_t=x\right]$. An application of Lemma~\ref{L:mprj3} shows that
$$
-\frac{u^{\alpha,n}_x}{u^{\alpha,n}} > -\frac{u^{\beta,n}_x}{u^{\beta,n}}
$$
for each $n$. It thus suffices to prove that $u^{\alpha,n}\to u^\alpha$ and $u^{\alpha,n}_x\to u^\alpha_x$ pointwise. The latter follows from the former using \emph{interior Schauder estimates}, for instance by applying the corollary of Theorem~15 in Chapter~3 of~\cite{Friedman:2008} on each subdomain $[0,T)\times(m^{-1},m)$, $m\geq 2$ (using the PDE representation of $u^{\alpha,n}$, and noticing that on each subdomain the coefficients of the parabolic operator associated to $u^{\alpha,n}$ are H\"older continuous, and $x^2 \sigma^n(x)^2$ is bounded away from zero for all sufficiently large $n$.)

To prove that $u^{\alpha,n}(t,x) \to u^\alpha(t,x)$, first note that $u^{\alpha,n}(t,x) = \Ex\left[ U'(D^n_{T-t} + \alpha) \mid D^n_0=x\right]$ and $u^{\alpha}(t,x) = \Ex\left[ U'(D_{T-t} + \alpha) \mid D_0=x\right]$ by the Markov property. Since $U'(\cdot+\alpha)$ is bounded, the desired convergence follows from the Bounded Convergence Theorem if $D^n_{T-t} \to D_{T-t}$ almost surely, with $D^n_0=D_0=x$. But this is clear: pathwise uniqueness and the construction of $\mu^n$ and $\sigma^n$ imply that $D$ and $D^n$ coincide on the event
$$
A_n = \{n^{-1} \leq D_s \leq n \ \text{for\ all\ } 0\leq s\leq T-t \},
$$
so $D^n_{T-t} = D_{T-t} \nv 1_{A_n} + D^n_{T-t}\nv 1_{A_n^c}$. Since $\Px(A_n)\to 1$, $D^n_{T-t}\to D_{T-t}$ almost surely, and the proof is finished.
\end{proof}

\section{Equilibrium stock price} \label{S:Sp}

In this section we are interested in how the market price of the stock changes when default occurs. If $\tau< T$, there may be a jump in the stock price at~$\tau$. Under certain cyclicality assumptions on the default intensity and the interest rate, it turns out that the sign of the jump must be negative. On the other hand, in specific circumstances it can happen that the jump is positive. The following results gives the precise conditions. The proofs rely on a number of lemmas, which are stated and proved in Appendix~\ref{AP:stdom}.

\begin{theorem} \label{T:sj}
Assume that the interest rate is counter-cyclical, and that the representative investor has strictly decreasing absolute risk aversion, as well as relative risk aversion bounded by one. Define
\begin{equation}
g(t,x)=\Ex[e^{-\int_t^T\lambda_u du} \mid D_T=x] \qquad \text{and} \qquad \phi(x) = 1 - \frac{U'(x+1)}{U'(x+\varepsilon)}.
\label{eq:gdef}
\end{equation}
Then the following hold.
\begin{itemize}
\item[(i)] If $\phi(x) g(t,x)$ is strictly increasing in $x$ for every $0 \leq t\leq T$, the stock price has a strictly positive jump at $\tau$.
\item[(ii)] If $\phi(x) g(t,x)$ is strictly decreasing in $x$ for every $0\leq t\leq T$, the stock price has a strictly negative jump at $\tau$.
\end{itemize}
\end{theorem}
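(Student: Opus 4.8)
The plan is to identify the jump of $S$ at $\tau$, reduce its sign to that of a covariance-type expression, and then pin down that sign from the three hypotheses together with the stochastic-comparison lemmas of Appendix~\ref{AP:stdom}. By~(\ref{eq:Sprepost1}), on $\{\tau\le T\}$ the jump of the stock price at $\tau$ equals $S^{\textit{post}}_\tau-S^{\textit{pre}}_\tau$, so it suffices to show that $S^{\textit{post}}_t-S^{\textit{pre}}_t$ has the asserted sign for every $t$. The definition of $\phi$ gives the pointwise identity $U'(x+1)=(1-\phi(x))U'(x+\varepsilon)$, and $\phi>0$ because $U'$ is strictly decreasing and $\varepsilon<1$. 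Writing $f=U'(D_T+\varepsilon)$, $G_t=e^{-\int_t^T\lambda_u\,du}$, $R_t=e^{\int_t^T r_u\,du}$ and substituting this identity into~(\ref{eq:xiprepost}) and~(\ref{eq:Sprepost2}), one obtains $\xi^{\textit{pre}}_t=\xi^{\textit{post}}_t-\delta_D$ and $S^{\textit{pre}}_t\,\xi^{\textit{pre}}_t=N-\delta_N$, where $N=\Ex[D_Tf\mid\F_t]$, $\xi^{\textit{post}}_t=\Ex[R_tf\mid\F_t]$, $\delta_N=\Ex[D_Tf\,G_t\phi(D_T)\mid\F_t]$ and $\delta_D=\Ex[R_tf\,G_t\phi(D_T)\mid\F_t]$. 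Putting the two prices over a common denominator and using $\xi^{\textit{pre}}_t,\xi^{\textit{post}}_t>0$, the sign of the jump equals the sign of
\[
\xi^{\textit{post}}_t\,\delta_N-N\,\delta_D=\Ex[R_tf\mid\F_t]\,\Ex[D_Tf\,G_t\phi(D_T)\mid\F_t]-\Ex[D_Tf\mid\F_t]\,\Ex[R_tf\,G_t\phi(D_T)\mid\F_t].
\]

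\emph{Reduction to a covariance inequality.} Introduce the $\F_t$-conditional probability measure $d\Px^f\propto f\,d\Px$. Since $f$ is $\sigma(D_T)$-measurable, $\Px^f\sim\Px$, the conditional law of $(D_s)_{s\ge t}$ given $D_T$ is the same under $\Px^f$ as under $\Px$, and $D$ stays Markov under $\Px^f$. Dividing the displayed quantity by $\Ex[f\mid\F_t]^2>0$ and writing $\Ex^f[XY]=\Ex^f[X]\Ex^f[Y]+\mathrm{Cov}^f(X,Y)$, the sign of the jump is the sign of
\[
\Ex^f[R_t]\,\mathrm{Cov}^f(D_T,G_t\phi(D_T))-\Ex^f[D_T]\,\mathrm{Cov}^f(R_t,G_t\phi(D_T)),
\]
with $\Ex^f[R_t],\Ex^f[D_T]>0$. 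Conditioning on $D_T$ turns the first covariance into $\mathrm{Cov}^f(D_T,\Phi(D_T))$ with $\Phi:=\phi\,g$; it is strictly positive when $\Phi$ is strictly increasing and strictly negative when $\Phi$ is strictly decreasing, because $D_T$ is non-degenerate under $\Px^f\sim\Px$ (as $\sigma^D>0$). Hence for~(i) it is enough to show $\mathrm{Cov}^f(R_t,G_t\phi(D_T))\le 0$, and for~(ii) that this covariance is $\ge 0$; by symmetry I treat~(i).

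\emph{Controlling the interest-rate term.} Decompose $\mathrm{Cov}^f(R_t,G_t\phi(D_T))=\Ex^f[\phi(D_T)\,\mathrm{Cov}^f(R_t,G_t\mid D_T)]+\mathrm{Cov}^f(\bar R_t(D_T),\Phi(D_T))$, with $\bar R_t(x):=\Ex^f[R_t\mid D_T=x]=\Ex[R_t\mid D_T=x,\F_t]$. Counter-cyclicality of $r$ (i.e.\ $r$ decreasing) makes $R_t$ a decreasing functional of the dividend path, which, together with the stochastic monotonicity of the dividend bridge in its endpoint (Appendix~\ref{AP:stdom}), makes $x\mapsto\bar R_t(x)$ decreasing; since $\Phi$ is increasing, the outer covariance is $\le 0$. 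For the inner term I use that strictly decreasing absolute risk aversion forces $\phi$ to be strictly decreasing --- because $U'(x+1)/U'(x+\varepsilon)=\exp(-\int_{x+\varepsilon}^{x+1}\ell_U(s)\,ds)$ is increasing under DARA --- so the hypothesis ``$\Phi=\phi g$ increasing'' forces $g$ to be increasing; combined with $r$ decreasing, the association/bridge-comparison lemmas of Appendix~\ref{AP:stdom} (where relative risk aversion bounded by one enters, through the monotonicity of $x\mapsto xU'(x+\varepsilon)$) give $\mathrm{Cov}^f(R_t,G_t\mid D_T)\le 0$. Adding the two estimates yields $\mathrm{Cov}^f(R_t,G_t\phi(D_T))\le 0$, hence the strictly positive jump of~(i); part~(ii) is symmetric.

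\emph{Main obstacle.} The delicate point is precisely the inner conditional covariance for genuinely non-constant $r$: conditionally on $D_T$, the quantities $R_t=e^{\int_t^T r(D_s)\,ds}$ and $G_t=e^{-\int_t^T\lambda(D_s)\,ds}$ are coupled functionals of the \emph{same} bridge, so they cannot be decoupled, and one must use counter-cyclicality of $r$ and the monotonicity packaged in ``$\phi g$ increasing'' simultaneously via bridge-comparison arguments. When $r$ is constant this covariance vanishes identically and the theorem collapses to the sign of $\mathrm{Cov}^f(D_T,\Phi(D_T))$, which is why the constant-rate case is immediate.
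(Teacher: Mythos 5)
Your reduction of the jump's sign to that of $\xi^{\textit{post}}_t\delta_N-N\delta_D$ is exactly the paper's starting point (it equals $c_tb_t-a_td_t$ in their notation, obtained from the same formula $\Delta S_\tau=a_t/c_t-(a_t-b_t)/(c_t-d_t)$), and the subsequent conditioning on $D_T$, the use of Lemma~\ref{L:stdom2} to make $x\mapsto\Ex[e^{\int_t^Tr_udu}\mid D_T=x]$ decreasing, and the role of DARA in making $\phi$ strictly decreasing all track the paper's proof. Your change of measure $d\Px^f\propto U'(D_T+\varepsilon)\,d\Px$ is a genuine refinement on the first covariance: it reduces it to $\mathrm{Cov}^f(D_T,\Phi(D_T))$, whose sign follows from the monotonicity of $\Phi$ alone, whereas the paper keeps $U'(D_T+\varepsilon)$ explicit and invokes $L_U\le 1$ precisely to make $x\mapsto xU'(x+\varepsilon)$ increasing. (This also means your parenthetical claim that $L_U\le 1$ enters through the inner conditional covariance is misplaced; in your setup that hypothesis is not used where you say it is, which is a symptom of the problem below.)

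The genuine gap is the inner term $\Ex^f\bigl[\phi(D_T)\,\mathrm{Cov}^f(R_t,G_t\mid D_T)\bigr]$. To sign it you need $\mathrm{Cov}\bigl(e^{\int_t^Tr_udu},e^{-\int_t^T\lambda_udu}\mid D_T\bigr)\le 0$ for (i) (and $\ge 0$ for (ii)), and neither follows from the stated hypotheses. First, the hypotheses constrain only the composite $\phi(x)g(t,x)$ and the cyclicality of $r$; they say nothing about the monotonicity of $\lambda$ itself. Your deduction that $\phi g$ increasing and $\phi$ decreasing force $g$ increasing is correct, but $g$ increasing does not make $e^{-\int\lambda(D_s)ds}$ a monotone functional of the path, which is what any bridge-comparison or association argument would require. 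Second, even granting $\lambda$ monotone, the appendix contains no negative-association lemma: Lemma~\ref{L:stdom4} treats only exponents monotone in the same direction and yields the inequality $\ge$, not the mixed case yielding $\le$. Third, for part (ii) the claimed symmetry collapses at an earlier stage, since $\Phi$ and $\phi$ both decreasing give no information about $g$ at all. For what it is worth, the paper's own proof disposes of this term by asserting $\mathrm{Cov}\bigl(e^{-\int\lambda}\phi(D_T),e^{\int r}U'(D_T+\varepsilon)\bigr)=\mathrm{Cov}\bigl(g(D_T)\phi(D_T),f(D_T)U'(D_T+\varepsilon)\bigr)$ ``by conditioning on $D_T$,'' which is exact only when $\Ex[e^{\int(r-\lambda)}\mid D_T]$ factors as $f(D_T)g(D_T)$, i.e.\ when the very conditional covariance you isolate vanishes (for instance when $r$ or $\lambda$ is constant). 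So you have correctly located the delicate step, but your proposal does not close it, and as written the argument only goes through when that conditional covariance is zero.
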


\begin{proof}
Equations (\ref{eq:xiprepost}), (\ref{eq:Sprepost1}) and (\ref{eq:Sprepost2}) show that the jump in the stock price is given by
\begin{equation} \label{eq:j0}
\Delta S_\tau = \frac{a_t}{c_t} - \frac{a_t-b_t}{c_t-d_t} \Bigg|_{t=\tau} \qquad \text{on}\quad \{0<\tau\leq T\},
\end{equation}
where
\begin{align*}
a_t &= \Ex\left[ D_T U'(D_T+\varepsilon) \mid \mathcal F_t \right] \\
b_t &= \Ex\left[ e^{-\int_t^T \lambda_u du} D_T ( U'(D_T+\varepsilon) - U'(D_T+1) )  \mid \mathcal F_t \right] \\
c_t &= \Ex\left[ e^{\int_t^T r_u du} U'(D_T+\varepsilon) \mid \mathcal F_t \right] \\
d_t &= \Ex\left[ e^{\int_t^T (r_u-\lambda_u) du} ( U'(D_T+\varepsilon) - U'(D_T+1) )  \mid \mathcal F_t \right).
\end{align*}
Using that $\xi^{\textit{pre}}_t = c_t-d_t$ and $S^{\textit{post}}_t = \frac{a_t}{c_t}$, elementary manipulations yields
\begin{align*}
\Delta S_\tau &= \frac{1}{\xi^{\textit{pre}}_t} \Bigg[ \text{Cov}_t\left(e^{-\int_t^T\lambda_u du} \phi(D_T), D_T U'(D_T+\varepsilon) \right)  \\
& \qquad\qquad - S^{\textit{post}}_t \text{Cov}_t\left(e^{-\int_t^T\lambda_u du} \phi(D_T), e^{\int_t^Tr_udu} U'(D_T+\varepsilon) \right) \Bigg]_{t=\tau}
\end{align*}
on $\{0<\tau\leq T\}$, where $\text{Cov}_t$ denotes $\mathcal F_t$-conditional covariance, and $\phi$ is defined in~(\ref{eq:gdef}). It suffices to analyze the two covariances, since both $\xi^{\textit{pre}}_t$ and $S^{\textit{post}}_t$ are strictly positive. Let us fix $t$. By the Markov property of $D$ (and using that $r_t=r(D_t)$ and $\lambda_t=\lambda(D_t)$), we may without loss of generality assume that $t=0$ (and think of $T$ as $T-t$), as long as the starting point $D_0>0$ is allowed to be arbitrary.

By conditioning on $D_T$, we find
$$
\text{Cov}\left(e^{-\int_0^T\lambda_u du} \phi(D_T), D_T U'(D_T+\varepsilon) \right)
= \text{Cov}\Big(g(D_T) \phi(D_T), D_T U'(D_T+\varepsilon) \Big)
$$
and
$$
\text{Cov}\left(e^{-\int_0^T\lambda_u du} \phi(D_T), e^{\int_0^Tr_udu} U'(D_T+\varepsilon) \right)
= \text{Cov}\left(g(D_T) \phi(D_T), f(D_T) U'(D_T+\varepsilon) \right),
$$
where $f(x)=\Ex[ e^{\int_0^Tr_udu} \mid D_0=x]$, and $g(x)=g(0,x)$ is given in~(\ref{eq:gdef}). Since $r$ is counter-cyclical, $f$ is decreasing by Lemma~\ref{L:stdom2}, and hence $x\mapsto f(x)U'(x+\varepsilon)$ is also decreasing. Moreover, the function $\psi(x)=xU'(x+\varepsilon)$ has a derivative $\psi'(x) = U'(x+\varepsilon) + xU''(x+\varepsilon)$, which is strictly greater than zero if and only if
$$
1 > -\frac{xU''(x+\varepsilon)}{U'(x+\varepsilon)} = -\frac{x}{x+\varepsilon}\frac{(x+\varepsilon)U''(x+\varepsilon)}{U'(x+\varepsilon)} = \frac{x}{x+\varepsilon} L_U(x+\varepsilon).
$$
This is indeed the case since the relative risk aversion is less than or equal to one. Thus $\psi$ is strictly increasing.

Under the assumption of $(i)$, $g(x)\phi(x)$ is strictly increasing, so the first covariance is strictly positive, while the second is strictly negative. This uses the fact that for positive, strictly increasing functions $h_1$ and $h_2$, and any non-constant random variable $X$, $\text{Cov}(h_1(X),h_2(X))>0$, while if $h_2$ is strictly decreasing, $\text{Cov}(h_1(X),h_2(X))<0$.

Under the assumption of $(ii)$ that $g(x)\phi(x)$ is strictly decreasing, the situation reverses and the jump becomes strictly negative.
\end{proof}

We also provide the following result, which shows that the stock price jump will be negative under more general conditions than those of Theorem~\ref{T:sj}.

\begin{theorem} \label{T:sj2}
Assume that the interest rate is counter-cyclical and the default intensity pro-cyclical. If the representative agent has strictly decreasing absolute risk aversion, then the stock price has a strictly negative jump at $\tau$.
\end{theorem}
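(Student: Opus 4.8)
The plan is to re‑use the reduction from the proof of Theorem~\ref{T:sj}, but to reorganize the resulting expression so that the bound on relative risk aversion is no longer needed. Formula~(\ref{eq:j0}) in that proof gives, with $a_t,b_t,c_t,d_t$ as defined there,
$$
\Delta S_\tau = \frac{a_\tau}{c_\tau} - \frac{a_\tau-b_\tau}{c_\tau-d_\tau} = \frac{b_\tau c_\tau - a_\tau d_\tau}{c_\tau(c_\tau-d_\tau)} \qquad\text{on } \{0<\tau\le T\},
$$
and since $c_\tau>0$ (its integrand is positive) and $c_\tau-d_\tau=\xi^{\textit{pre}}_\tau>0$, it suffices to prove $b_\tau c_\tau - a_\tau d_\tau<0$. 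Using $U'(x+\varepsilon)-U'(x+1)=U'(x+\varepsilon)\phi(x)$ to pull $\phi$ into the integrands, and then the Markov property of $D$ exactly as in the proof of Theorem~\ref{T:sj} (so that we may take $\tau=0$ with an arbitrary starting point $D_0>0$), I would change to the probability measure $\nu$ whose density with respect to $\Px$ is proportional to $U'(D_T+\varepsilon)$. Writing $X=D_T$, $Y=e^{\int_0^T r_u\,du}$ and $h=e^{-\int_0^T\lambda_u\,du}\,\phi(D_T)$, a direct computation turns the inequality $b_0c_0-a_0d_0<0$ into $\Ex_\nu[hX]\,\Ex_\nu[Y]<\Ex_\nu[X]\,\Ex_\nu[hY]$, that is, $\Ex_{\nu_X}[h]<\Ex_{\nu_Y}[h]$, where $\nu_X$ and $\nu_Y$ are obtained from $\nu$ by a further tilt with densities proportional to $X$ and to $Y$, respectively.

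The gain from this reorganization is that all four weights now in play, namely $X$, $Y$, $U'(D_T+\varepsilon)$ and $h$, are monotone functionals of the path of $D$: $X=D_T$ is increasing; $U'(D_T+\varepsilon)$ is decreasing; $Y=e^{\int_0^T r_u\,du}$ is decreasing because $r$ is counter-cyclical; and $h=e^{-\int_0^T\lambda_u\,du}\phi(D_T)$ is decreasing because $\lambda$ is pro-cyclical, so $e^{-\int_0^T\lambda_u\,du}$ is decreasing, and $\phi$ is strictly decreasing by strictly decreasing absolute risk aversion (indeed $\frac{d}{dx}\log\frac{U'(x+1)}{U'(x+\varepsilon)}=\ell_U(x+\varepsilon)-\ell_U(x+1)>0$ since $\varepsilon<1$). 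In particular we never need $x\mapsto xU'(x+\varepsilon)$ to be monotone, which is exactly where the bound $L_U\le 1$ entered the proof of Theorem~\ref{T:sj}; here $D_T$ and $U'(D_T+\varepsilon)$ always occur as separate factors, and pro-cyclicality of $\lambda$ is precisely what makes $h$ a single decreasing density.

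With these monotonicities in hand I would invoke the stochastic-comparison machinery of Appendix~\ref{AP:stdom} (the setting of Lemma~\ref{L:stdom2} and its companions), which rests on the positive association of the law of the one-dimensional diffusion $D$ and on the fact that tilting that law by a monotone functional preserves association and shifts it monotonically in the stochastic order. Tilting by the decreasing function $U'(D_T+\varepsilon)$ leaves $\nu$ positively associated; tilting $\nu$ by the increasing functional $X$ then yields $\nu_X\succeq\nu$ in the stochastic order, while tilting by the decreasing functional $Y$ yields $\nu_Y\preceq\nu$, so $\nu_X\succeq\nu_Y$. Since $h$ is a strictly decreasing functional and the diffusion is genuinely random ($\sigma^D>0$), this ordering is strict enough to give $\Ex_{\nu_X}[h]<\Ex_{\nu_Y}[h]$, hence $b_\tau c_\tau - a_\tau d_\tau<0$ and $\Delta S_\tau<0$ on $\{0<\tau\le T\}$.

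The step I expect to be the real obstacle is the last one: making precise, in the generality needed, the positive-association and stochastic-ordering statements for the law of $D$ under the successive tilts, together with the strictness of the final inequality. This is the role of the lemmas collected in Appendix~\ref{AP:stdom}, with counter-cyclicality of $r$ entering there to guarantee that $e^{\int_0^T r_u\,du}$ is a genuinely decreasing path functional; the algebraic reduction, the change of measure, and the monotonicity of $\phi$ are routine by comparison.
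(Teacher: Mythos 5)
Your reduction is the paper's: formula~(\ref{eq:j0}), the observation that it suffices to prove $a_0d_0-b_0c_0>0$ from an arbitrary starting point, the factorization $U'(x+\varepsilon)-U'(x+1)=U'(x+\varepsilon)\phi(x)$, and the monotonicity inventory ($\phi$ strictly decreasing because $\ell_U$ is strictly decreasing and $\varepsilon<1$; $e^{\int_0^Tr_udu}$ and $e^{-\int_0^T\lambda_udu}$ decreasing path functionals by the cyclicality assumptions) all match, and you are right that the bound $L_U\le 1$ is not needed here because $D_T$ and $U'(D_T+\varepsilon)$ never have to be merged into a single monotone function.

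The gap is in the final step, which you flag as the obstacle and then delegate to Appendix~\ref{AP:stdom}; the appendix does not contain what you need. Your route requires (a) positive association of the law of the path of $D$, (b) preservation of association under tilting by the monotone functionals $U'(D_T+\varepsilon)$, $X$ and $Y$, and (c) strictness of the resulting stochastic ordering. Claim (b) as you state it --- that tilting an associated law by a monotone functional preserves association --- is false for general associated measures (one would need something stronger, such as the MTP2/FKG lattice property of the finite-dimensional laws of a one-dimensional diffusion, which is nowhere established in the paper), and neither (a) nor (c) appears in Lemmas~\ref{L:stdom2}--\ref{L:stdom10}. Those lemmas do something structurally different: Lemmas~\ref{L:stdom2}--\ref{L:stdom4} concern the process \emph{conditioned on} $D_T$ (via time reversal), and Lemma~\ref{L:stdom10} is an exchangeability argument that works only because its arguments are functions of a single real random variable, so any two realizations are comparable; on path space two paths need not be comparable, and the pointwise condition $\psi(x,y)+\psi(y,x)\ge 0$ fails (a path can end higher yet spend more time low, so the signs of the two factors decouple). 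The paper closes your inequality $\Ex_\nu[Xh]\,\Ex_\nu[Y]<\Ex_\nu[X]\,\Ex_\nu[hY]$ differently: first condition on $D_T$ and use Lemma~\ref{L:stdom4} to bound $\Ex[e^{\int_0^T(r_u-\lambda_u)du}\mid D_T]\ge f(D_T)g(D_T)$ (the direction is right because $\phi\ge0$ and this term enters with a plus sign), which turns all four expectations into integrals of functions of the real variable $D_T$; then apply Lemma~\ref{L:stdom10} with weight $U'(x+\varepsilon)$, where the symmetrized kernel equals $f(x)f(y)\bigl(\tfrac{x}{f(x)}-\tfrac{y}{f(y)}\bigr)\bigl(g(y)\phi(y)-g(x)\phi(x)\bigr)>0$ for $x\ne y$, since $x/f(x)$ is strictly increasing and $g\phi$ strictly decreasing, with strictness coming from $D_T$ having no atoms. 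If you wish to keep your change-of-measure formulation, finish it this way rather than through path-space association.
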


\begin{proof}
Let $a_t$, $b_t$, $c_t$ and $d_t$ be as in the proof of Theorem~\ref{T:sj}. From Equation~(\ref{eq:j0}) we see that a sufficient condition for having a strictly negative jump is that $a_td_t - b_tc_t>0$ for all $t\leq T$. As in the proof of Theorem~\ref{T:sj} it suffices to consider $t=0$.

By Lemma~\ref{L:stdom4} and the cyclicality of $r$ and $\lambda$, we have
$$
\Ex\left[ e^{\int_0^T (r_u-\lambda_u) du} \mid D_T\right]\geq \Ex\left[ e^{\int_0^Tr_udu}  \mid D_T\right] \Ex\left[ e^{-\int_0^T\lambda_udu} \mid D_T\right].
$$
Therefore, with $f(x)=\Ex[ e^{\int_0^Tr_udu} \mid D_T=x]$ and $g(x)=\Ex[ e^{-\int_0^T \lambda_u du} \mid D_T=x]$, we obtain by conditioning on $D_T$ that
\begin{align*}
a_0d_0 - b_0c_0 &\geq \Ex\left[ U'(D_T+\varepsilon) D_T \right] \Ex\left[ f(D_T)g(D_T)U'(D_T+\varepsilon)\phi(D_T)\right] \\
& \qquad - \Ex\left[f(D_T) U'(D_T+\varepsilon) \right] \Ex\left[ f(D_T) U'(D_T+\varepsilon)D_T\phi(D_T)\right].
\end{align*}
Here $\phi(x)$ is again given by~(\ref{eq:gdef}). The derivative of $\phi$ is
\begin{equation} \label{eq:phider}
\phi'(x)  = \frac{U'(x+1)}{U'(x+\varepsilon)} \Big[ \ell_U(x+1) - \ell_U(x+\varepsilon) \Big],
\end{equation}
which is strictly negative since the absolute risk aversion is assumed to be strictly decreasing. Therefore $\phi$ is strictly decreasing. Moreover, by Lemma~\ref{L:stdom4} and the cyclicality of $r$ and $\lambda$, the functions $f$ and $g$ are decreasing. They are also strictly positive. Hence
\begin{align*}
\Big( xf(y)g(y)\phi(y) & - f(x)f(y)y\phi(y)\Big)  + \Big( yf(x)g(x)\phi(x) - f(y)f(x)x\phi(x) \Big) \\
&= f(x)f(y) \left( \frac{x}{f(x)} - \frac{y}{f(x)} \right) \Big( g(y)\phi(y) - g(x)\phi(x) \Big) > 0
\end{align*}
for $x\neq y$. Observing that $D_T$ has no atoms and $U'(x+\varepsilon)>0$, Lemma~\ref{L:stdom10} then yields that $a_0d_0 - b_0c_0>0$, as desired.
\end{proof}

Naively one might expect the jump in the stock price always to be negative, for the following reason. The default event leads to an instantaneous drop in the aggregate wealth in the economy. If the representative investor has a decreasing absolute risk aversion, this should lead to a reduced demand for the risky asset (after default, the stock is the only risky asset). This in turn forces the stock price down so that market clearing is maintained.

Such an argument supposes that the stock price jump is exclusively a wealth effect. However, when the default intensity is stochastic, there is also a ``non-myopic'' effect originating from expected future co-movements of the default intensity and the dividend process. Specifically, if the default intensity is highly counter-cyclical, and the current (pre-default) value of the dividend process is low, then even a moderate expected future dividend increase is coupled with a dramatic future reduction in the default intensity. The representative investor, in anticipation of the reduced risk of default, may then wish to shift wealth to the defaultable bond. This causes a downward pressure on the stock price, pushing it below what would be its fundamental value, were there no defaultable bond in the economy. When the default occurs, this downward pressure vanishes, and the stock price jumps up.

Of course, the same reasoning could be used for very high values of the dividend process to argue that the jump would be negative in these cases. Consistent with this observation, we have found that the function $x\mapsto \phi(x) g(t,x)$ appearing in Theorem~\ref{T:sj} becomes decreasing for large values of $x$, even in examples where $\lambda$ is highly counter-cyclical. In such cases the price jump will still be (mostly) positive on simulated paths, if the probability is sufficiently small that $D_t$ ever reaches the high levels where the function is decreasing.

We end this section with a numerical case study to support the argument just made. Specifically, we assume that the dividend process in Equation~(\ref{eq:divid}) is a geometric Brownian motion, i.e. $\mu^D(x) = \mu$ and $\sigma^D(x) = \sigma$. Using time reversal of diffusions, see Lemma~\ref{lem:exsj} in Appendix~\ref{AP:stdom}, we may write
$$
g(x) = \Ex\left[ e^{-\int_0^T \lambda(\widetilde{D}_u) du} \bigg| \widetilde{D}_0 = x \right],
$$
where the process $\widetilde D$ satisfies the SDE
\begin{equation} \label{eq:expgphires}
d\widetilde D_t = \widetilde{\mu}(t,\widetilde{D}_t) dt + \sigma \widetilde D_t dW_t
\end{equation}
with
\begin{equation} \label{eq:expgphires2}
\widetilde{\mu}(t,x) = -\mu x + \frac{1}{\sigma} \left(\mu - \frac{1}{2} \sigma^2 - \frac{\log(x/D_0)}{T-t}\right).
\end{equation}
We set $\mu = -0.2$, $\sigma = 0.3$, $D_0=1$, $r=0.03$, and use a strongly counter-cyclical default intensity given by $\lambda(x) = 9 e^{-x}$. Further, we choose a logarithmic utility function given by $U(x) = \log(x)$. Under these choices of parameters, we estimated via Monte-Carlo simulation that at the default time the stock experiences a positive jump of size 0.001.

We estimate $g(x)$ via Monte-Carlo simulations using~(\ref{eq:expgphires}) and (\ref{eq:expgphires2}), and report the behavior of $\phi(x) g(x)$ in Figure~\ref{fig:compphig}. We see that this function is initially increasing, and it only starts decreasing for sufficiently large values of $x$ ($x > 9$). However, the probability that the geometric Brownian motion with negative drift reaches those values before time $T$, given that it starts at $1$, is extremely low.

\begin{figure}
\begin{center}
\includegraphics[scale=0.5]{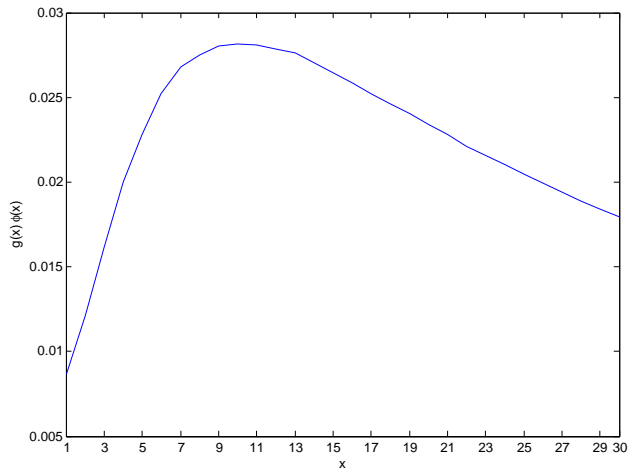}
\vspace{-4.2cm}
\caption{Plot of the function $\phi(x) g(x)$. }
\label{fig:compphig}
\end{center}
\end{figure}

\section{Wealth processes} \label{S:W}
The jump in an individual agent's wealth can be analyzed using the same techniques as for the stock price. Starting from Equations~(\ref{eq:Wkt2}) and~(\ref{eq:Wkexp}), and using Lemma~\ref{L:ce} in Appendix~\ref{AP:fe}, the wealth of the $k$:th investor can be decomposed into a pre- and post-default term. The result is
$$
W_{kt} = \1{\tau>t} W_{kt}^{\textit{pre}} + \1{\tau\leq t} W_{kt}^{\textit{post}},
$$
where
\begin{align*}
W_{kt}^{\textit{post}} &= \frac{1}{\xi^{\textit{post}}_t}\Ex[ U'(D_T+\varepsilon) I_k(y_k U'(D_T+\varepsilon)) \mid \mathcal F_t ] \\
W_{kt}^{\textit{pre}} &= \frac{1}{\xi^{\textit{pre}}_t} \Ex\bigg[\left(1-e^{-\int_t^T \lambda_s ds} \right) U'(D_T + \epsilon)I_k(y_k U'(D_T+\varepsilon)) \\
&\qquad\qquad\qquad +  e^{-\int_t^T \lambda_s ds}  U'(D_T + 1) I_k(y_k U'(D_T+1))  \bigg| \mathcal{F}_t  \bigg].
\end{align*}
The jump in wealth is then $\Delta W_{k\tau} = W_{k\tau}^{\textit{post}} -W_{k\tau}^{\textit{pre}}$ on $\{\tau\leq T\}$. The following result shows that the condition of Theorem~\ref{T:sj2} is also sufficient to ensure a negative jump in wealth. Unfortunately, the structure of the final value of the wealth process prevents us from obtaining a simple condition to guarantee a positive jump. (The reason is that, in contrast to the stock, $W_{kT}$ cannot be expressed as $\xi_T$ times an $\mathcal F_T$-measurable random variable.)

\begin{theorem} \label{T:wj}
Assume that the interest rate is counter-cyclical and the default intensity pro-cyclical. If the representative agent has strictly decreasing absolute risk aversion, then every agent's wealth process has a strictly negative jump at $\tau$.
\end{theorem}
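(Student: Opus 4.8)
The plan is to replay the proof of Theorem~\ref{T:sj2} with the wealth process in place of the stock. Write $i_\varepsilon(x)=I_k(y_kU'(x+\varepsilon))$ and $i_1(x)=I_k(y_kU'(x+1))$, keep $c_t$ and $d_t$ exactly as in the proof of Theorem~\ref{T:sj}, and set
$$
a^k_t=\Ex\!\left[U'(D_T+\varepsilon)\,i_\varepsilon(D_T)\mid\mathcal F_t\right],\qquad
b^k_t=\Ex\!\left[e^{-\int_t^T\lambda_u du}\bigl(U'(D_T+\varepsilon)\,i_\varepsilon(D_T)-U'(D_T+1)\,i_1(D_T)\bigr)\mid\mathcal F_t\right].
$$
The pre/post-default decomposition of $W_{kt}$ stated above then reads $W_{kt}^{\textit{post}}=a^k_t/c_t$ and $W_{kt}^{\textit{pre}}=(a^k_t-b^k_t)/(c_t-d_t)$, with $c_t=\xi^{\textit{post}}_t>0$ and $c_t-d_t=\xi^{\textit{pre}}_t>0$, so that $\Delta W_{k\tau}=-(a^k_\tau d_\tau-b^k_\tau c_\tau)/(c_\tau(c_\tau-d_\tau))$. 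Hence it suffices to prove $a^k_t d_t-b^k_t c_t>0$ at $t=\tau$, and exactly as in the proofs of Theorems~\ref{T:sj} and~\ref{T:sj2} the Markov property of $D$ reduces this to the case $t=0$ with an arbitrary starting point $D_0>0$.

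The crux is bounding $b^k_0$. In contrast to the stock, the integrand $U'(x+\varepsilon)i_\varepsilon(x)-U'(x+1)i_1(x)$ does not factor as $\phi(x)$ times a term also present in $d_0$. The remedy is to write
$$
U'(x+\varepsilon)i_\varepsilon(x)-U'(x+1)i_1(x)=i_\varepsilon(x)\bigl(U'(x+\varepsilon)-U'(x+1)\bigr)+U'(x+1)\bigl(i_\varepsilon(x)-i_1(x)\bigr)
$$
and discard the last term, which is strictly negative because $U'$ and $I_k$ are strictly decreasing (so $U'(x+\varepsilon)>U'(x+1)$, hence $i_\varepsilon(x)<i_1(x)$). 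This gives $b^k_0\le\Ex[e^{-\int_0^T\lambda_u du}\,i_\varepsilon(D_T)U'(D_T+\varepsilon)\phi(D_T)]$. Conditioning on $D_T$, using that $f(x)=\Ex[e^{\int_0^T r_u du}\mid D_T=x]$ and $g(x)=\Ex[e^{-\int_0^T\lambda_u du}\mid D_T=x]$ are decreasing and that $\Ex[e^{\int_0^T(r_u-\lambda_u)du}\mid D_T]\ge f(D_T)g(D_T)$ (Lemma~\ref{L:stdom4} together with the assumed cyclicality of $r$ and $\lambda$), and writing $w(x)=U'(x+\varepsilon)$, one obtains
$$
a^k_0d_0-b^k_0c_0\ \ge\ \Ex\!\left[i_\varepsilon(D_T)w(D_T)\right]\Ex\!\left[f(D_T)g(D_T)\phi(D_T)w(D_T)\right]-\Ex\!\left[g(D_T)i_\varepsilon(D_T)\phi(D_T)w(D_T)\right]\Ex\!\left[f(D_T)w(D_T)\right].
$$

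It remains to check that the right-hand side is strictly positive. Strictly decreasing absolute risk aversion makes $\phi$ strictly decreasing (equation~(\ref{eq:phider})), so $g\phi$ is strictly decreasing and positive; strict concavity of $U$ and of $U_k$ makes $i_\varepsilon$ strictly increasing and positive, so $i_\varepsilon/f$ is strictly increasing. The bound chosen for $b^k_0$ was arranged so that the four functions above obey the identity $i_\varepsilon\cdot fg\phi=(g i_\varepsilon\phi)\cdot f$; consequently, symmetrizing the bilinear form over two independent copies $x,y$ of $D_T$ as in the proof of Theorem~\ref{T:sj2}, the symmetrized integrand collapses to
$$
f(x)f(y)\left(\frac{i_\varepsilon(x)}{f(x)}-\frac{i_\varepsilon(y)}{f(y)}\right)\bigl(g(y)\phi(y)-g(x)\phi(x)\bigr),
$$
which is strictly positive for $x\ne y$ because its two parenthetical factors carry the same sign. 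Since $D_T$ has no atoms and $w>0$, Lemma~\ref{L:stdom10} yields $a^k_0d_0-b^k_0c_0>0$; as this holds for every starting point and every $k$, the theorem follows.

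The only genuine difficulty I foresee is the choice of the one-sided estimate of $b^k_0$: it must simultaneously be signed and restore the product identity $h_1h_2=h_3h_4$ among the four functions entering the symmetrized form, since that identity is exactly what lets the symmetrization telescope into the factorized expression above; the estimate used here is tailored to achieve both. I also note that, unlike in Theorem~\ref{T:sj}, no condition on relative risk aversion is needed, because $i_\varepsilon$ rather than the identity map $x\mapsto x$ now plays the role of the asset's terminal payoff and is automatically increasing whenever $U$ and $U_k$ are concave.
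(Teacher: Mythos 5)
Your proposal is correct and follows essentially the same route as the paper's proof: the same $a^k,b^k,c,d$ decomposition, the same one-sided bound on $b^k_0$ obtained by discarding the negative term $U'(x+1)(i_\varepsilon(x)-i_1(x))$ (which the paper performs tersely via the monotonicity of $\iota_k$), the same application of Lemma~\ref{L:stdom4} to replace $e^{R-\Lambda}$ by $f(D_T)g(D_T)$, and the same symmetrization collapsing to $f(x)f(y)\bigl(g(y)\phi(y)-g(x)\phi(x)\bigr)\bigl(i_\varepsilon(x)/f(x)-i_\varepsilon(y)/f(y)\bigr)$ before invoking Lemma~\ref{L:stdom10}. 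The only cosmetic difference is that the monotonicity of $f$ and $g$ is really Lemma~\ref{L:stdom2} rather than Lemma~\ref{L:stdom4}; otherwise the argument matches the paper's step for step.
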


\begin{proof}
We consider the $k$:th investor, so let us fix $k$. The proof follows along the same lines as that of Theorem~\ref{T:sj2}. The jump in wealth is
\begin{equation} \label{eq:wj0}
\Delta W_{k\tau} = \frac{a^k_t}{c_t} - \frac{a^k_t-b^k_t}{c_t-d_t} \Bigg|_{t=\tau} \qquad \text{on}\quad \{\tau\leq T\},
\end{equation}
where
\begin{align*}
a^k_t &= \Ex\left[ U'(D_T+\varepsilon) \iota_k(D_T+\varepsilon) \mid \mathcal F_t \right] \\
b^k_t &= \Ex\left[ e^{-\int_t^T \lambda_u du} ( U'(D_T+\varepsilon) \iota_k(D_T+\varepsilon) - U'(D_T+1)  \iota_k(D_T+1))  \mid \mathcal F_t \right] \\
c_t &= \Ex\left[ e^{\int_t^T r_u du} U'(D_T+\varepsilon) \mid \mathcal F_t \right] \\
d_t &= \Ex\left[ e^{\int_t^T (r_u-\lambda_u) du} ( U'(D_T+\varepsilon) - U'(D_T+1) )  \mid \mathcal F_t \right],
\end{align*}
and $\iota_k(x)=I_k(y_k U'(x))$. As in the proof of Theorem~\ref{T:sj2}, it suffices to prove that $a^k_0d_0 - b^k_0c_0>0$. To make the notation less cluttered we write $D=D_T$, $R=\int_0^Tr_udu$, $\Lambda=\int_0^T\lambda_udu$. As before, $\phi(x) = 1 - \frac{U'(x+1)}{U'(x+\varepsilon)}$. Since both $I_k$ and $U'$ are decreasing, $\iota_k$ is increasing. Hence
\begin{align*}
a^k_0d_0 - b^k_0c_0 &\geq \Ex\left[ U'(D+\varepsilon)\iota_k(D+\varepsilon)\right] \Ex\left[ e^{R-\Lambda}U'(D+\varepsilon)\phi(D)\right] \\
& \qquad - \Ex\left[e^{-\Lambda} U'(D+\varepsilon) \iota_k(D+\varepsilon) \phi(D)\right] \Ex\left[ e^R U'(D+\varepsilon)\right].
\end{align*}
The cyclicality of $r$ and $\lambda$ implies, via Lemma~\ref{L:stdom4}, that
$$
\Ex[ e^{R-\Lambda} \mid D]\geq \Ex[ e^R \mid D] \Ex[ e^{-\Lambda} \mid D].
$$
Therefore, with $f(x)=\Ex[ e^R \mid D=x]$ and $g(x)=\Ex[ e^{-\Lambda} \mid D=x]$, we obtain by conditioning on $D$ that
\begin{align*}
a^k_0d_0 - b^k_0c_0 &\geq \Ex\left[ U'(D+\varepsilon)\iota_k(D+\varepsilon)\right] \Ex\left[ f(D)g(D)U'(D+\varepsilon)\phi(D)\right] \\
& \qquad - \Ex\left[g(D) U'(D+\varepsilon) \iota_k(D+\varepsilon) \phi(D)\right] \Ex\left[ f(D) U'(D+\varepsilon)\right].
\end{align*}
Now, the cyclicality of $r$ and $\lambda$ together with Lemma~\ref{L:stdom2} shows that $f$ and $g$ are decreasing. Since also $f$ is strictly positive, $\phi$ is strictly decreasing, and $\iota_k$ is increasing, we have
\begin{align*}
\Big(\iota_k(x+\varepsilon)&f(y)g(y)\phi(y) - g(x)\iota_k(x+\varepsilon)\phi(x)f(y)\Big) \\
&\qquad + \Big( \iota_k(y+\varepsilon)f(x)g(x)\phi(x) - g(y)\iota_k(y+\varepsilon)\phi(y)f(x)\Big) \\
&= f(x)f(y)\Big( \phi(y)g(y) - \phi(x)g(x) \Big) \Big( \frac{1}{f(x)}\iota_k(x+\varepsilon)  - \frac{1}{f(y)}\iota_k(y+\varepsilon) \Big) > 0
\end{align*}
for $x\neq y$. The positivity of $a_0d_0 - b_0c_0$ now follows by Lemma~\ref{L:stdom10}, since $U'(x+\varepsilon)>0$ and $D$ has no atoms.
\end{proof}

\subsection{Jump sizes under power utility}\label{sec:jumpow}
We now investigate how the size of the jump is affected by the risk aversion of the agents. For this, we assume that all agents in the economy have power utility with relative risk aversion $\gamma_k\in (0,1]$. That is,
$$
U_k(x) = \frac{x^{1-\gamma_k}}{1-\gamma_k},
$$
which should be interpreted as $U_k(x)=\log(x)$ when $\gamma_k=1$. We then have
$$
U_k'(x) = x^{-\gamma_k} \qquad\text{and}\qquad I_k(y) = y^{-1/\gamma_k}.
$$
The following result gives a condition under which a more risk averse investor will suffer a smaller jump in wealth than one who is less risk averse.

\begin{proposition} \label{P:wjs}
Assume that the interest rate is counter-cyclical and the default intensity pro-cyclical, and that the representative agent has strictly decreasing absolute risk aversion. Consider two agents $k$ and $\ell$ with $\gamma_k\geq\gamma_\ell$. If
\begin{equation}\label{eq:Pwis}
\frac{\Ex\left[ e^{-\int_t^T\lambda_udu}U'(D_T+1)^{1-1/\gamma_k}\mid \mathcal F_t \right]}{\Ex\left[ U'(D_T+\varepsilon)^{1-1/\gamma_k}\mid \mathcal F_t \right]}
\leq \frac{\Ex\left[ e^{-\int_t^T\lambda_udu}U'(D_T+1)^{1-1/\gamma_\ell}\mid \mathcal F_t \right]}{\Ex\left[ U'(D_T+\varepsilon)^{1-1/\gamma_\ell}\mid \mathcal F_t \right]}
\end{equation}
for all $0\leq t\leq T$, then
$$
\left| \frac{\Delta W_{k\tau}}{W_{k\tau-}} \right| \leq \left| \frac{\Delta W_{\ell\tau}}{W_{\ell\tau-}} \right|.
$$
If $\lambda$ is constant, the statement remains true also in the case where both inequalities are reversed.
\end{proposition}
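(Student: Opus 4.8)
The plan is to reduce the inequality between relative jump sizes to hypothesis~(\ref{eq:Pwis}) plus a monotonicity-in-$\gamma$ property of a single conditional expectation. First, observe that by Theorem~\ref{T:wj} both relative jumps are strictly negative, so
\begin{equation*}
\left|\frac{\Delta W_{k\tau}}{W_{k\tau-}}\right| = 1 - \frac{W_{k\tau}^{\textit{post}}}{W_{k\tau}^{\textit{pre}}}, \qquad \left|\frac{\Delta W_{\ell\tau}}{W_{\ell\tau-}}\right| = 1 - \frac{W_{\ell\tau}^{\textit{post}}}{W_{\ell\tau}^{\textit{pre}}},
\end{equation*}
and the claim becomes $W_{k\tau}^{\textit{post}}/W_{k\tau}^{\textit{pre}} \ge W_{\ell\tau}^{\textit{post}}/W_{\ell\tau}^{\textit{pre}}$.

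Next I would substitute the power-utility expressions. Since $I_k(y)=y^{-1/\gamma_k}$, one has $\iota_k(x)=I_k(y_kU'(x))=y_k^{-1/\gamma_k}U'(x)^{-1/\gamma_k}$, hence $U'(x)\iota_k(x)=y_k^{-1/\gamma_k}U'(x)^{\rho_k}$ with $\rho_k:=1-1/\gamma_k\le0$. Plugging this into the pre/post decomposition of $W_{kt}$ displayed before Theorem~\ref{T:wj}, the constants $y_k^{-1/\gamma_k}$ cancel in the ratio and
\begin{equation*}
\frac{W_{kt}^{\textit{post}}}{W_{kt}^{\textit{pre}}} = \frac{\xi_t^{\textit{pre}}}{\xi_t^{\textit{post}}}\cdot\frac{A_k(t)}{A_k(t)+B_k(t)-C_k(t)},
\end{equation*}
where $A_k(t)=\Ex[U'(D_T+\varepsilon)^{\rho_k}\mid\mathcal F_t]$, $B_k(t)=\Ex[e^{-\int_t^T\lambda_u du}U'(D_T+1)^{\rho_k}\mid\mathcal F_t]$ and $C_k(t)=\Ex[e^{-\int_t^T\lambda_u du}U'(D_T+\varepsilon)^{\rho_k}\mid\mathcal F_t]$. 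Because $U'$ is decreasing and $\rho_k\le0$, one has $U'(x+1)^{\rho_k}\ge U'(x+\varepsilon)^{\rho_k}$, so $B_k(t)\ge C_k(t)$ and the denominator is $\ge A_k(t)>0$; the factor $\xi_t^{\textit{pre}}/\xi_t^{\textit{post}}$ is common to both agents. Since $s\mapsto1/(1+s)$ is decreasing on $s\ge0$, the desired inequality $W_{k\tau}^{\textit{post}}/W_{k\tau}^{\textit{pre}}\ge W_{\ell\tau}^{\textit{post}}/W_{\ell\tau}^{\textit{pre}}$ is equivalent to
\begin{equation*}
\frac{B_k(\tau)-C_k(\tau)}{A_k(\tau)} \le \frac{B_\ell(\tau)-C_\ell(\tau)}{A_\ell(\tau)}.
\end{equation*}

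The $B/A$ part is exactly hypothesis~(\ref{eq:Pwis}): $B_k(t)/A_k(t)\le B_\ell(t)/A_\ell(t)$ for all $t$, in particular $t=\tau$. So it remains to prove $C_k(t)/A_k(t)\ge C_\ell(t)/A_\ell(t)$, and then add. To this end I would condition on $D_T$: with $g(t,D_T)=\Ex[e^{-\int_t^T\lambda_u du}\mid\mathcal F_t,D_T]$ (the function $g$ of~(\ref{eq:gdef})), one has $C_k(t)/A_k(t)=\Ex^{(k)}[g(t,D_T)]$, the expectation of $g(t,D_T)$ under the probability whose density with respect to the $\mathcal F_t$-conditional law of $D_T$ is proportional to $U'(D_T+\varepsilon)^{\rho_k}$. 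Since $\lambda$ is pro-cyclical, $g(t,\cdot)$ is decreasing (by Lemma~\ref{L:stdom2}, as in the proofs of Theorems~\ref{T:sj2} and~\ref{T:wj}). Moreover $\gamma_k\ge\gamma_\ell$ gives $0\ge\rho_k\ge\rho_\ell$, so the likelihood ratio of the $k$-density to the $\ell$-density, proportional to $U'(x+\varepsilon)^{\rho_k-\rho_\ell}$ with exponent $\rho_k-\rho_\ell\ge0$, is a decreasing function of $x$ (as $U'(\cdot+\varepsilon)$ is positive and decreasing). Hence the $k$-measure is stochastically dominated by the $\ell$-measure, and integrating the decreasing function $g(t,\cdot)$ against the two measures gives $C_k(t)/A_k(t)\ge C_\ell(t)/A_\ell(t)$, completing the argument. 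Finally, if $\lambda$ is constant then $g(t,x)=e^{-\lambda(T-t)}$ does not depend on $x$, so $C_k(t)/A_k(t)=e^{-\lambda(T-t)}=C_\ell(t)/A_\ell(t)$ and every implication above becomes an equivalence; therefore reversing the inequality in~(\ref{eq:Pwis}) reverses the conclusion.

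The main obstacle is the step $C_k(t)/A_k(t)\ge C_\ell(t)/A_\ell(t)$: it relies both on $g(t,\cdot)$ being decreasing — which is precisely why pro-cyclicality of $\lambda$ is assumed — and on the correct monotone-likelihood-ratio comparison of the two $\gamma$-weighted conditional laws of $D_T$. The sign bookkeeping ($\rho_k\le0$ and $\varepsilon<1$) must be tracked carefully so that the weights and $g$ are monotone in compatible directions, and one should note that all of the conditional expectations involved are finite under the standing integrability assumptions.
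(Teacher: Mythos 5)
Your proof is correct and follows essentially the same route as the paper: after reducing the claim to the cross-product inequality $A_\ell C_k - A_k C_\ell + A_k B_\ell - A_\ell B_k \ge 0$ (the paper's $a^\ell_t b^k_t \ge a^k_t b^\ell_t$), your $B/A$ step is exactly hypothesis~(\ref{eq:Pwis}) and your $C/A$ step is the paper's inequality~(\ref{eq:wj10}). The only (cosmetic) difference is that you establish the latter via a monotone-likelihood-ratio/stochastic-dominance argument, whereas the paper symmetrizes with an independent copy via Lemma~\ref{L:stdom10}; both rest on the same two facts, namely that $g$ is decreasing (Lemma~\ref{L:stdom2} plus pro-cyclicality of $\lambda$) and that $U'(\cdot+\varepsilon)^{\nu_k-\nu_\ell}$ is decreasing.
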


\begin{proof}
Let $a^k_t$, $b^k_t$, $c_t$, $d_t$ be as in the proof of Theorem~\ref{T:wj}. If $W_k$ jumps at $t$, we have
$$
\frac{\Delta W_{kt}}{W_{kt-}} = \frac{c_t-d_t}{a^k_t-b^k_t}\left(\frac{a^k_t}{c_t} - \frac{a^k_t - b^k_t}{c_t - d_t}\right) = \frac{c_t-d_t}{c_t}\frac{a^k_t}{a^k_{t}-b^k_t} - 1,
$$
and this is negative by Theorem~\ref{T:wj} (this is the only place where the counter-cyclicality of $r$ is needed.) Hence
$$
\left| \frac{\Delta W_{k\tau}}{W_{k\tau-}} \right| - \left| \frac{\Delta W_{\ell\tau}}{W_{\ell\tau-}} \right| =
\frac{c_t-d_t}{c_t}\left(\frac{a^\ell_t}{a^\ell_t-b^\ell_t}  - \frac{a^k_t}{a^k_t-b^k_t}\right),
$$
and this is nonpositive if and only if $a^\ell_t b^k_t \geq a^k_tb^\ell_t$. As in the proof of Theorem~\ref{T:sj} it is enough to consider $t=0$. Let us define $\nu_k=1-1/\gamma_k$ and $\nu_\ell=1-1/\gamma_\ell$. The assumption of power utility implies that
$$
U'(x)I_k(y_k U'(x)) = y_k^{-1/\gamma_k} U'(x)^{\nu_k},
$$
and hence, with $D=D_T$ and $\Lambda = \int_0^T\lambda_sds$,
\begin{align*}
a^\ell_0 b^k_0 - a^k_0b^\ell_0 &= y_k^{-1/\gamma_k}y_\ell^{-1/\gamma_\ell}
\Big( \Ex\left[ U'(D+\varepsilon)^{\nu_\ell}\right] \Ex\left[ e^{-\Lambda} U'(D+\varepsilon)^{\nu_k}\right]
\\
&\qquad\qquad\qquad\qquad - \Ex\left[e^{-\Lambda}  U'(D+\varepsilon)^{\nu_\ell}\right] \Ex\left[ U'(D+\varepsilon)^{\nu_k}\right] \\
&\qquad\qquad + \Ex\left[ U'(D+\varepsilon)^{\nu_k}\right] \Ex\left[ e^{-\Lambda} U'(D+1)^{\nu_\ell}\right]
\\
&\qquad\qquad\qquad\qquad - \Ex\left[e^{-\Lambda}  U'(D+1)^{\nu_k}\right] \Ex\left[ U'(D+\varepsilon)^{\nu_\ell}\right]
  \Big).
\end{align*}
The result follows once we prove that the first difference inside the parentheses is nonnegative, i.e.,
\begin{equation}\label{eq:wj10}
\Ex\left[ U'(D+\varepsilon)^{\nu_\ell}\right] \Ex\left[ e^{-\Lambda} U'(D+\varepsilon)^{\nu_k}\right]
\geq
\Ex\left[e^{-\Lambda}  U'(D+\varepsilon)^{\nu_\ell}\right] \Ex\left[ U'(D+\varepsilon)^{\nu_k}\right],
\end{equation}
where by conditioning on $D$ we may replace $e^{-\Lambda}$ by $g(D)=\Ex[ e^{-\Lambda} \mid D]$.
Since $\gamma_k\geq \gamma_\ell$, we have $\delta=\nu_k-\nu_\ell\geq 0$. Moreover, since $U'$ and $g$ are both decreasing (the latter due to Lemma~\ref{L:stdom2} and the pro-cyclicality of $\lambda$), we have that
$$
U'(x+\varepsilon)^{\nu_\ell}U'(y+\varepsilon)^{\nu_\ell}\Big(U'(y+\varepsilon)^\delta - U'(x+\varepsilon)^\delta \Big)\Big( g(y) - g(x) \Big) \geq 0.
$$
Thus, it is enough to apply Lemma~\ref{L:stdom4} to establish~(\ref{eq:wj10}), which completes the proof for non-constant $\lambda$. The last assertion is readily deduced upon noting that equality holds in~(\ref{eq:wj10}) if $\lambda$ is constant.
\end{proof}

As a corollary we obtain that in an economy populated exclusively by investors with power utilities and logarithmic utilities, those with logarithmic utilities will suffer the smallest relative jump in wealth.

\begin{corollary}
Assume that the interest rate is counter-cyclical and the default intensity constant, and that the representative agent has strictly decreasing absolute risk aversion. Consider two agents $k$ and $\ell$. If $\gamma_k=1$, i.e.~the $k$:th investor has log-utility, then
$$
\left| \frac{\Delta W_{k\tau}}{W_{k\tau-}} \right| \leq \left| \frac{\Delta W_{\ell\tau}}{W_{\ell\tau-}} \right|.
$$
\end{corollary}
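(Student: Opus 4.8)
The plan is to obtain this as a direct corollary of Proposition~\ref{P:wjs}: it suffices to check that, when $\gamma_k=1$ and $\lambda$ is constant, the hypothesis~(\ref{eq:Pwis}) holds automatically for every competing agent $\ell$. First I would record that, since every agent has power utility with relative risk aversion in $(0,1]$, we have $\gamma_\ell\le 1=\gamma_k$, so Proposition~\ref{P:wjs} applies in its first (non-reversed) form, with $k$ the more risk-averse agent; note also that a constant default intensity is (trivially) pro-cyclical, so the cyclicality hypotheses of Proposition~\ref{P:wjs} are met.

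Next I would simplify~(\ref{eq:Pwis}). Setting $\gamma_k=1$ makes the exponent $1-1/\gamma_k$ equal to zero, so $U'(D_T+1)^{1-1/\gamma_k}=U'(D_T+\varepsilon)^{1-1/\gamma_k}=1$ (both marginal utilities being strictly positive), and the left-hand side of~(\ref{eq:Pwis}) collapses to $\Ex[e^{-\int_t^T\lambda_u\,du}\mid\mathcal F_t]=e^{-\lambda(T-t)}$, the last equality because $\lambda$ is constant. The same deterministic factor $e^{-\lambda(T-t)}$ pulls out of the numerator on the right-hand side, so after cancelling it (and using that the denominators are strictly positive), ~(\ref{eq:Pwis}) is seen to be equivalent to
$$\Ex\big[U'(D_T+\varepsilon)^{\nu_\ell}\mid\mathcal F_t\big]\ \le\ \Ex\big[U'(D_T+1)^{\nu_\ell}\mid\mathcal F_t\big],\qquad \nu_\ell:=1-\tfrac1{\gamma_\ell}.$$

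To finish I would verify this pointwise before taking conditional expectations: since $0<\varepsilon<1$ and $U'$ is strictly decreasing we have $U'(D_T+\varepsilon)>U'(D_T+1)>0$, while $\nu_\ell\le 0$ because $\gamma_\ell\in(0,1]$, so the map $z\mapsto z^{\nu_\ell}$ is non-increasing on $(0,\infty)$; hence $U'(D_T+\varepsilon)^{\nu_\ell}\le U'(D_T+1)^{\nu_\ell}$ almost surely, and the displayed inequality follows for all $t\in[0,T]$. Applying Proposition~\ref{P:wjs} then gives $|\Delta W_{k\tau}/W_{k\tau-}|\le|\Delta W_{\ell\tau}/W_{\ell\tau-}|$. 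I do not expect any real obstacle here: the only things to be slightly careful about are that a constant $\lambda$ indeed qualifies as pro-cyclical, and the degenerate subcase $\gamma_\ell=1$, where~(\ref{eq:Pwis}) holds with equality and the conclusion is immediate.
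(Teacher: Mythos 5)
Your proof is correct and follows essentially the same route as the paper's: both reduce condition~(\ref{eq:Pwis}) in the case $\gamma_k=1$, $\lambda$ constant to the inequality $\Ex[U'(D_T+\varepsilon)^{\nu_\ell}\mid\mathcal F_t]\le\Ex[U'(D_T+1)^{\nu_\ell}\mid\mathcal F_t]$, which holds pointwise because $x\mapsto U'(x)^{1-1/\gamma_\ell}$ is increasing. Your added remarks (that a constant intensity is trivially pro-cyclical, and that the case $\gamma_\ell=1$ gives equality) are accurate details the paper leaves implicit.
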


\begin{proof}
If $\lambda$ is constant and $\gamma_k=1$, the inequality~(\ref{eq:Pwis}) reduces to
$$
\Ex\left[ U'(D_T+1)^{1-1/\gamma_\ell}\mid \mathcal F_t \right]
\geq
\Ex\left[ U'(D_T+\varepsilon)^{1-1/\gamma_\ell}\mid \mathcal F_t \right].
$$
This is satisfied since $U'(x)^{1-1/\gamma_\ell}$ is increasing in $x$, so Proposition~\ref{P:wjs} applies.
\end{proof}

\subsection{Measures of Systemic Risk}
Based on the analysis done in the previous sections, we suggest two measures to quantify the amount of systemic risk at time $t$ in our economy. These are given by
\begin{eqnarray*}
\nonumber \varrho^{W}_t &=& \frac{\frac{1}{N} \sum_{k=1}^N ( W_{kt}^{pre} - W_{kt}^{post} )}{P_t - \varepsilon}, \\
\nonumber \varrho^{S}_t &=&  \frac{S_t^{pre} - S_t^{post}}{P_t - \varepsilon}.
\end{eqnarray*}
Here $S_t^{pre} - S_t^{post}$ corresponds to the drop in the stock price, and $W_{kt}^{pre} - W_{kt}^{post}$ to the drop in the wealth of the $k$:th investor, if default were to happen at time $t$. Note that the measures are positive if the drop is positive (the jump is downward). The measure $\varrho^{S}_t$ measures the impact a default would have on the stock, under the scenario that a default is imminent. The measure, $\varrho^{W}_t$, instead, quantifies the impact that default would have on the aggregate wealth of the economy, under the same scenario. Both measures can be interpreted as the number of dollars lost by the stock (respectively by the portfolio of the ``average'' investor in the economy) for each dollar lost by the corporate bond at time $t$, in case default occurs at $t$.
Notice that the two measures convey different information. While $\varrho^{S}_t$ depends on the interplay between cyclicality properties of the default intensity and interest rate, and the risk aversion of the representative investor, $\varrho^{W}_t$ also accounts for the aggregate level of risk aversion in the economy. We postpone the characterization of the dependence of these measures on the market and default risk parameters of our model for future research.


%

\section{Conclusions} \label{sec:concl}
We have developed a novel framework where a stock and a defaultable bond interact endogenously through equilibrium mechanisms. Our market consists of a money market account, a stock, and a defaultable bond, which are related to each other only through an underlying dividend process, whose dynamics is unaffected by the default event. The price processes of the stock and of the defaultable bond are determined endogenously in equilibrium. We analyzed in detail the impact of the default event on the stock price, market price of risk, default risk premium, and investor wealth processes, as well as the relations between them. We found that the equilibrium price of the stock typically jumps at default. As the default event has no casual impact on the dividend process, this results in a form of endogenous interaction between the stock and the defaultable bond. We have characterized the direction of the jump of the stock price at default in terms of investor preferences and cyclicality properties of the default intensity, showing that upwards jumps are possible when the default intensity is sufficiently counter-cyclical. Under the assumption of pro-cyclical default intensity and counter-cyclical interest rate, we have shown that the wealth process of the representative investor jumps down upon default, and that power utility investors will suffer a smaller relative jump in wealth if they are more risk averse. Based on the analysis done in the paper, we have suggested two possible measures to quantify systemic risk.
In the future, we would like to extend our results to an economy consisting of multiple defaultable securities, and analyze how default correlations and cyclicality properties of the model parameters impact the price of the securities and the aggregate wealth in the economy.

\paragraph{Acknowledgments} A significant portion of the research reported in this paper was done while the authors were visiting the Swiss Institute of Finance at EPFL. The authors are grateful to them for the hospitality and the useful conversations on this topic.
We would also like to thank Jak\v sa Cvitani\'c, Jeremy Staum, and Robert Jarrow for useful conversations and feedback provided.

\appendix

\section{Results relating to filtration expansion} \label{AP:fe}

\begin{theorem}[Martingale representation in $\mathbb G$] \label{T:mgr}
For every square integrable $\mathbb G$~martingale $N$ there are $\mathbb G$~predictable processes $(a_t)_{0\leq t\leq T}$ and $(b_t)_{0\leq t\leq T}$, such that
$$
\Ex\Big(\int_0^T |a_s|^2 ds\Big) < \infty, \qquad \Ex\Big(\int_0^T |b_s|^2 \lambda_s ds\Big) < \infty,
$$
and
$$
N_t = N_0 + \int_0^t a_s dB_s + \int_0^t b_s dM_s.
$$
\end{theorem}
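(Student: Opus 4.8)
The plan is to show that $(B,M)$ has the predictable representation property in $\mathbb G$, by first representing a total family of terminal variables and then extending by an $L^2$ isometry argument. It suffices to produce, for every $\xi\in L^2(\Omega,\mathcal G_T,\Px)$, $\mathbb G$~predictable processes $a,b$ with $\Ex\int_0^T|a_s|^2\,ds<\infty$ and $\Ex\int_0^T|b_s|^2\lambda_s\,ds<\infty$ such that $\xi=\Ex[\xi]+\int_0^Ta_s\,dB_s+\int_0^Tb_s\,dM_s$; the martingale statement then follows by taking $\mathcal G_t$~conditional expectations. Since $B$ and $M$ are orthogonal square integrable martingales with $\langle B\rangle_t=t$ and $\langle M\rangle_t=\int_0^{t\wedge\tau}\lambda_s\,ds$, the map $(a,b)\mapsto\int_0^Ta\,dB+\int_0^Tb\,dM$ is an isometry into $L^2(\mathcal G_T)$ with image orthogonal to the constants, so the set of $\xi$ admitting such a representation is a closed subspace of $L^2(\mathcal G_T)$; it therefore suffices to exhibit a total family contained in it. (One may also take $b$ to vanish after $\tau$, which does not change $\int b\,dM$ and matches the stated integrability.)

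For the total family, note that $\mathcal G_T=\F_T\vee\sigma(\tau\wedge T)$, so a monotone class argument shows that the linear span of the variables $\xi=G\,h(\tau\wedge T)$, with $G\in L^\infty(\F_T)$ and $h$ bounded Borel on $[0,T]$, is dense in $L^2(\mathcal G_T)$. Fix such a $\xi$ and compute $N_t=\Ex[\xi\mid\gt]$ explicitly. The Cox construction gives $\Px(\tau>u\mid\F_T)=e^{-\int_0^u\lambda_s\,ds}$ for $u\le T$, so the conditional law of $\tau$ given $\F_T$ has density $\lambda_ue^{-\int_0^u\lambda_s\,ds}$ on $(0,T]$ and mass $e^{-\int_0^T\lambda_s\,ds}$ on $\{\tau>T\}$. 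Combining this with Hypothesis~(H) (used in the form $\Ex[G\mid\gt]=\Ex[G\mid\F_t]$ for $G\in L^1(\F_T)$) and the reduction of $\mathbb G$~conditioning to $\mathbb F$~conditioning on $\{t<\tau\}$ (Lemma~\ref{L:ce}), one obtains
\[
N_t=\1{\tau\le t}\,h(\tau)\,\Ex[G\mid\F_t]+\1{\tau>t}\,e^{\int_0^t\lambda_s\,ds}\,J_t,
\qquad
J_t=\Ex\Big[G\Big(\int_t^Th(u)\lambda_ue^{-\int_0^u\lambda_s\,ds}\,du+h(T)e^{-\int_0^T\lambda_s\,ds}\Big)\ \Big|\ \F_t\Big].
\]
Both $t\mapsto\Ex[G\mid\F_t]$ and $J_t$ are continuous $\mathbb F$~semimartingales (the former an $\mathbb F$~martingale, the latter an $\mathbb F$~martingale plus an absolutely continuous finite variation part coming from the lower integration limit), because $\mathbb F$ is the Brownian filtration; hence their martingale parts are stochastic integrals with respect to~$B$.

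To assemble the representation, write $N_t=A_t-\1{\tau\le t}\big(A_t-h(\tau)\Ex[G\mid\F_t]\big)$ with $A_t=e^{\int_0^t\lambda_s\,ds}J_t$, and apply It\^o's formula using $d\1{\tau\le t}=dM_t+\lambda_t\1{\tau\ge t}\,dt$. This exhibits $N$ as the sum of a continuous local martingale, which by the previous paragraph is of the form $\int_0^ta_s\,dB_s$, an absolutely continuous finite variation part, and a jump at $\tau$ of size $\Delta N_\tau=h(\tau)\Ex[G\mid\F_\tau]-A_\tau$. Taking $b_t=\big(h(t)\Ex[G\mid\F_t]-A_t\big)\1{t\le\tau}$, which is $\mathbb G$~predictable since $\Ex[G\mid\F_\cdot]$ and $A$ are continuous and $\mathbb F$~adapted, the integral $\int_0^tb_s\,dM_s$ has exactly the same jump at $\tau$ as $N$; since $N$, $\int a\,dB$ and $\int b\,dM$ are all $\mathbb G$~martingales, their finite variation parts must cancel, giving $N_t=N_0+\int_0^ta_s\,dB_s+\int_0^tb_s\,dM_s$. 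Finally, for general $\xi\in L^2(\mathcal G_T)$ one approximates by members of the total family and passes to the limit using the isometry of the first paragraph, which simultaneously delivers the asserted square integrability of $a$ and of $b\sqrt{\lambda}$.

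I expect the main obstacle to be the explicit computation of $N_t$ for the building blocks and the careful glueing across the default jump: in particular, checking that the continuous martingale part of $N$ is genuinely a stochastic integral with respect to $B$ — which is exactly where Hypothesis~(H) and the Brownian predictable representation property in $\mathbb F$ enter — and verifying that the drift produced by differentiating $J_t$ cancels against the drift produced by the compensator of $\1{\tau\le t}$. By contrast, the monotone class density step and the $L^2$ isometry are routine.
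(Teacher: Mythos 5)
Your argument is correct, but it is a genuinely different (and much longer) route than the paper's: the paper proves Theorem~\ref{T:mgr} in one line by citing Theorem~2.3 of Kusuoka (1999), which delivers the representation property for $(B,M)$ in any expansion satisfying Hypothesis~(H) with an intensity, whereas you reprove the result from scratch by exploiting the extra structure of the Cox construction --- namely the explicit conditional law of $\tau$ given $\mathcal F_T$, with density $\lambda_u e^{-\int_0^u\lambda_s\,ds}$ --- to compute $\Ex[G\,h(\tau\wedge T)\mid\mathcal G_t]$ in closed form, and then combine the Brownian predictable representation property in $\mathbb F$ (transferred to $\mathbb G$ via immersion) with an It\^o/jump-matching argument and an $L^2$-isometry closure. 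Your approach buys self-containedness and explicit formulas for the integrands $a$ and $b$ on a total family; the paper's buys brevity and applies in the more abstract setting Kusuoka treats. The steps you flag as delicate do go through: $A_t=e^{\int_0^t\lambda_s ds}J_t=\Ex[G(\int_t^T h(u)\lambda_u e^{-\int_t^u\lambda_s ds}du+h(T)e^{-\int_t^T\lambda_s ds})\mid\mathcal F_t]$ is bounded by $2\|G\|_\infty\|h\|_\infty$ once the exponential prefactor is absorbed, which together with $\Ex\int_0^{T\wedge\tau}\lambda_s\,ds=\Px(\tau\le T)\le 1$ gives the integrability of $b$ for the building blocks; and the continuous martingale part is a $dB$-integral because $B$ remains a $\mathbb G$~Brownian motion under~(H). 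Two small points deserve a sentence each in a polished write-up: the identification $\mathcal G_T=\mathcal F_T\vee\sigma(\tau\wedge T)$ requires a word about the right-continuous regularization in the definition of $\mathbb G$ (harmless at the terminal time in the Cox setting), and the cancellation of the finite variation parts should be phrased via uniqueness of the special semimartingale decomposition of $N-\int b\,dM$, which is a continuous $\mathbb G$~martingale and hence has vanishing drift.
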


\begin{proof}
This follows from Theorem~2.3 in~\cite{Kusuoka:1999}, since every $\mathbb F$~martingale remains a $\mathbb G$~martingale.
\end{proof}

The following result is crucial in that it allows us to reduce $\mathcal G_t$-conditional expectations to $\mathcal F_t$-conditional expectations. This type of result is classical in credit risk modeling.

\begin{lemma} \label{L:ce}
Let $X=X^1\1{\tau>T} + X^2\1{\tau\leq T}$, where $X^1$ and $X^2$ are integrable $\mathcal F_T$-measurable random variables. Then
$$
E\big[X \mid\mathcal G_t\big] = \1{\tau\leq t}\Ex\left[X^2 \mid \mathcal F_t\right] + \1{\tau> t}\Ex\left[\big(1-e^{-\int_t^T \lambda_s ds} \big) X^2
 + e^{-\int_t^T \lambda_s ds} X^1 \mid\mathcal F_t \right]
$$
\end{lemma}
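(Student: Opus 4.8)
The plan is to split $X$ according to whether $\tau\le t$ or $\tau>t$ and treat the two pieces separately, using the Cox construction of $\tau$. On the event $\{\tau\le t\}$ we have $X=X^2$, and $X^2$ is already $\mathcal F_T$-measurable, so $\1{\tau\le t}\Ex[X\mid\mathcal G_t]=\1{\tau\le t}\Ex[X^2\mid\mathcal G_t]$; I would then invoke the standard fact (a consequence of Hypothesis~(H) and the Cox construction, see~\cite{Bielecki/Rutkowski:2001}) that for an $\mathcal F_T$-measurable integrable random variable $Y$ one has $\1{\tau\le t}\Ex[Y\mid\mathcal G_t]=\1{\tau\le t}\Ex[Y\mid\mathcal F_t]$. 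This accounts for the first term. The real content is the pre-default piece: on $\{\tau>t\}$ one has the classical ``key lemma'' identity
$$
\1{\tau>t}\Ex[Z\mid\mathcal G_t]=\1{\tau>t}\frac{\Ex[\1{\tau>t}Z\mid\mathcal F_t]}{\Px(\tau>t\mid\mathcal F_t)}
$$
for integrable $Z$, valid because $\mathcal G_t=\mathcal F_t\vee\sigma(\tau\wedge t)$ (up to the right-continuous regularization, which does not affect conditional expectations here).

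The key steps, in order, are as follows. First I would record the two building-block facts above. Second, apply the pre-default identity with $Z=X=X^1\1{\tau>T}+X^2\1{\tau\le T}$, so that $\1{\tau>t}X=X^1\1{\tau>T}+X^2\1{t<\tau\le T}$, giving
$$
\1{\tau>t}\Ex[X\mid\mathcal G_t]=\1{\tau>t}\frac{\Ex\big[X^1\1{\tau>T}+X^2\1{t<\tau\le T}\mid\mathcal F_t\big]}{\Px(\tau>t\mid\mathcal F_t)}.
$$
Third, use the defining property of the Cox construction: conditionally on $\mathcal F_T$ (hence on $\mathcal F_t$ after a further conditioning), $\Px(\tau>s\mid\mathcal F_T)=e^{-\int_0^s\lambda_u du}$ for $s\le T$, since $\tau=\inf\{s:\int_0^s\lambda_u du\ge\chi\}$ with $\chi$ standard exponential and independent of $\mathbb F$. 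Therefore, conditioning first on $\mathcal F_T$ inside the $\mathcal F_t$-expectation in the numerator, $\Ex[X^1\1{\tau>T}\mid\mathcal F_t]=\Ex[X^1 e^{-\int_0^T\lambda_u du}\mid\mathcal F_t]$ and $\Ex[X^2\1{t<\tau\le T}\mid\mathcal F_t]=\Ex[X^2(e^{-\int_0^t\lambda_u du}-e^{-\int_0^T\lambda_u du})\mid\mathcal F_t]$, while the denominator is $\Px(\tau>t\mid\mathcal F_t)=\Ex[e^{-\int_0^t\lambda_u du}\mid\mathcal F_t]$. Since $\lambda$ is $\mathbb F$-adapted and $\int_0^t\lambda_u du$ is $\mathcal F_t$-measurable, the factor $e^{-\int_0^t\lambda_u du}$ pulls out of both the numerator and the denominator and cancels, leaving
$$
\1{\tau>t}\Ex[X\mid\mathcal G_t]=\1{\tau>t}\Ex\big[(1-e^{-\int_t^T\lambda_u du})X^2+e^{-\int_t^T\lambda_u du}X^1\mid\mathcal F_t\big].
$$
Adding the two pieces gives the claimed formula.

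The main obstacle is really just bookkeeping rather than a deep difficulty: one must be careful that the right-continuous enlargement $\mathcal G_t=\bigcap_{u>t}\mathcal F_u\vee\sigma(\tau\wedge u)$ does not spoil the ``key lemma'' identity (it does not, because the $\mathbb F$-conditional survival probability $e^{-\int_0^t\lambda_u du}$ is continuous in $t$, so the right-continuous version agrees with the naive one), and that the conditioning on $\mathcal F_T$ used to evaluate $\Px(t<\tau\le T\mid\mathcal F_t)$ is legitimately interchanged with the $\mathcal F_t$-conditioning via the tower property together with independence of $\chi$ and $\mathbb F$. Integrability of all the conditional expectations follows from integrability of $X^1$ and $X^2$ and boundedness of the exponential factors by $1$. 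Alternatively, and perhaps more cleanly for the writeup, I could bypass the ratio form entirely by directly computing $\Ex[\1{A}\1{\tau>t}X]$ for an arbitrary $A\in\mathcal F_t$, conditioning on $\mathcal F_T$ throughout, and matching it against $\Ex[\1A\1{\tau>t}(\text{RHS})]$; this avoids any mention of $\Px(\tau>t\mid\mathcal F_t)$ in a denominator and makes the $\mathcal G_t$-measurability of the right-hand side (which must be checked) the only remaining point, and that is immediate since the right-hand side is $\mathcal F_t$-measurable on each of the two sets $\{\tau\le t\},\{\tau>t\}$, both of which lie in $\mathcal G_t$.
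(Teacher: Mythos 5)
Your proof is correct and follows essentially the same route as the paper's: both rest on Hypothesis~(H) for the $\mathcal F_T$-measurable part and on the standard key-lemma/Cox computation for the pre-default part. The only difference is presentational — you derive the identity $\Ex[\1{\tau>T}Y\mid\mathcal G_t]=\1{\tau>t}\Ex[e^{-\int_t^T\lambda_u\,du}\,Y\mid\mathcal F_t]$ from the Dellacherie ratio formula together with the conditional law of $\tau$ given $\mathcal F_T$, whereas the paper invokes that identity as known and combines it with the decomposition $X=X^2-\1{\tau>T}(X^2-X^1)$.
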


\begin{proof}
First note that
$$
X = X^2 - \1{\tau>T} (X^2-X^1).
$$
Since Hypothesis (H) holds between $\mathbb F$ and $\mathbb G$, any $\mathbb F$~martingale $N$ satisfies $\Ex[N_T\mid\mathcal G_t]=N_t=\Ex[N_T\mid\mathcal F_t]$. Apply this with $N_t=\Ex[ X^2 \mid \mathcal F_t ]$, whose final value is $N_T= X^2$ since $X^2$ is $\mathcal F_T$-measurable, to get $\Ex[X^2\mid\mathcal G_t]=\Ex[X^2\mid\mathcal F_t]$. Next, use the identity
$$
\Ex[ \1{\tau>T} Y\mid\mathcal G_t]= \1{\tau>t}\Ex[e^{-\int_t^T\lambda_u du} Y\mid\mathcal F_t],
$$
which holds for $\mathcal F_T$-measurable and integrable $Y$, with $Y=X^2-X^1$. The claim now follows after some rearrangement.
\end{proof}

\begin{lemma} \label{L:jp}
Let $X_t=X^{\textit{pre}}_t\1{\tau>t}+X^{\textit{post}}_t\1{\tau\leq t}$ be a $\mathbb G$~semimartingale, where $X^{\textit{pre}}$ and $X^{\textit{post}}$ are continuous. Then
\begin{equation}\label{eq:L1}
dX_t =  \1{\tau\geq t}dX^{\textit{pre}}_t+\1{\tau < t}dX^{\textit{post}}_t
+ (X^{\textit{post}}_t - X^{\textit{pre}}_t)d\1{\tau\leq t}.
\end{equation}
If $X$ is a strictly positive with representation
$$
\frac{dX_t}{X_{t-}} = a_t dB_t + b_t dM_t + c_t dt,
$$
then $b_t = (X^{\textit{post}}_t / X^{\textit{pre}}_t) - 1$. If in addition $X^{\textit{pre}}$ and $X^{\textit{post}}$ have representations
$$
\frac{dX^i_t}{X^i_t} = a^i_t dB_t + c^i_t dt, \qquad i\in\{\textit{pre}, \textit{post}\},
$$
then $a_t = a^{\textit{pre}}_t\1{\tau\geq t}+a^{\textit{post}}_t\1{\tau< t}$.
\end{lemma}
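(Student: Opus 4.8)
The plan is to establish the three assertions in sequence, each building on its predecessor, using only It\^o's formula for (discontinuous) semimartingales and matching of the canonical components of a special semimartingale.

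First I would prove the decomposition~(\ref{eq:L1}). Write $N_t = \1{\tau\leq t}$, which is a pure-jump process of finite variation with a single jump of size $1$ at $\tau$. Then $X_t = X^{\textit{pre}}_t(1-N_t) + X^{\textit{post}}_t N_t$. Applying the integration-by-parts formula to each of the two products, and using that $X^{\textit{pre}}$ and $X^{\textit{post}}$ are continuous (so the covariation terms $[X^{\textit{pre}},N]$ and $[X^{\textit{post}},N]$ vanish), one gets
$$
dX_t = (1-N_{t-})\,dX^{\textit{pre}}_t + N_{t-}\,dX^{\textit{post}}_t - X^{\textit{pre}}_{t-}\,dN_t + X^{\textit{post}}_{t-}\,dN_t.
$$
Since $N_{t-}=\1{\tau<t}$ and $1-N_{t-}=\1{\tau\geq t}$, and the integrands against $dN_t$ may be taken as the left limits $X^{\textit{pre}}_{t}$, $X^{\textit{post}}_{t}$ by continuity, this is exactly~(\ref{eq:L1}). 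A small point to check is that on $\{\tau=t\}$ the jump contributed is $(X^{\textit{post}}_\tau - X^{\textit{pre}}_\tau)$, consistent with $\Delta X_\tau$.

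Next, for the identification of $b_t$: the process $N_t - \int_0^{t\wedge\tau}\lambda_s ds = M_t$, so $dN_t = dM_t + \lambda_t\1{\tau\geq t}dt$. Substituting into~(\ref{eq:L1}) and using the given representations of $X^{\textit{pre}}$, $X^{\textit{post}}$, collect the $dM_t$ part: it is $(X^{\textit{post}}_t - X^{\textit{pre}}_t)\,dM_t$, whereas from $\frac{dX_t}{X_{t-}} = a_t dB_t + b_t dM_t + c_t dt$ the $dM_t$ part is $X_{t-}b_t\,dM_t = X^{\textit{pre}}_t b_t\,dM_t$ on $\{\tau\geq t\}$ (and $M$ does not move after $\tau$). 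Since the jump part of a special semimartingale is unique, $X^{\textit{pre}}_t b_t = X^{\textit{post}}_t - X^{\textit{pre}}_t$, i.e. $b_t = X^{\textit{post}}_t/X^{\textit{pre}}_t - 1$. The continuous martingale part then gives $a_t$ by the same matching argument: from~(\ref{eq:L1}) the $dB_t$ contribution is $\1{\tau\geq t}X^{\textit{pre}}_t a^{\textit{pre}}_t\,dB_t + \1{\tau<t}X^{\textit{post}}_t a^{\textit{post}}_t\,dB_t$, while from the assumed representation it is $X_{t-}a_t\,dB_t$; comparing and dividing by $X_{t-}$ (which equals $X^{\textit{pre}}_t$ before $\tau$ and $X^{\textit{post}}_t$ after) yields $a_t = a^{\textit{pre}}_t\1{\tau\geq t} + a^{\textit{post}}_t\1{\tau<t}$.

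The main obstacle is bookkeeping rather than depth: one must be careful about left limits versus values at $\tau$ (harmless here since $X^{\textit{pre}},X^{\textit{post}}$ are continuous, but it needs to be said), and one must invoke uniqueness of the decomposition of a special semimartingale into its continuous local martingale part, its purely discontinuous local martingale part, and its predictable finite-variation part in order to legitimately ``match coefficients.'' The only subtlety is that $M$ stops at $\tau$, so all identities involving $b_t$ and $a^{\textit{post}}$ are really statements on $\{\tau\geq t\}$ and $\{\tau<t\}$ respectively; writing the indicators explicitly takes care of this.
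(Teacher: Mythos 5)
Your proposal is correct and follows essentially the same route as the paper: integration by parts (It\^o's formula) for the decomposition~(\ref{eq:L1}), then identification of $b_t$ and $a_t$ by matching the purely discontinuous and continuous martingale parts against the assumed representation, using $\1{\tau\geq t}X^{\textit{pre}}_t=\1{\tau\geq t}X_{t-}$ and $\1{\tau<t}X^{\textit{post}}_t=\1{\tau<t}X_{t-}$. The bookkeeping points you flag (vanishing covariations, left limits at $\tau$, the identities holding only where $dM$ charges) are exactly the ones the paper's own argument relies on.
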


\begin{proof}
The expression~(\ref{eq:L1}) follows from It\^o's formula. Concerning the expression for $b_t$, note that the continuity of $X^{\textit{pre}}$ and $X^{\textit{post}}$ implies that
$$
X_t = Y_t + \int_0^t (X^{\textit{post}}_t - X^{\textit{pre}}_t)dM_t
$$
for some continuous process $Y$. Since $\int_0^t (X^{\textit{post}}_s - X^{\textit{pre}}_s) dM_s = \int_0^t X_{s-} ((X^{\textit{post}}_s / X^{\textit{pre}}_s) - 1) dM_s$, the result follows. Finally, combining~(\ref{eq:L1}) with the assumed representation for $X^{\textit{pre}}$ and $X^{\textit{post}}$ yields
$$
dX_t = \1{\tau\geq t}X^{\textit{pre}}_t a^{\textit{pre}}_t dB_t + \1{\tau < t}X^{\textit{post}}_t a^{\textit{post}}_t dB_t + d \widetilde Y_t,
$$
where $\widetilde Y_t$ is a stochastic integral with respect to $dM_t$ and $dt$ only. The expression for $a_t$ now follows, since $\1{\tau\geq t}X^{\textit{pre}}_t=\1{\tau\geq t}X_{t-}$ and $\1{\tau< t}X^{\textit{post}}_t=\1{\tau< t}X_{t-}$.
\end{proof}

\section{Results relating to stochastic ordering and correlations} \label{AP:stdom}

\begin{lemma} \label{L:stdom2}
Let $X=(X_t)_{0\leq t\leq T}$ satisfy $dX_t = a(t,X_t)dB_t + b(t,X_t)dt$ with a fixed starting point $X_0$, where we assume that
\begin{itemize}
\item[(i)] $a(t,x)$ and $b(t,x)$ are infinitely differentiable;
\item[(ii)] $X$ is does not explode;
\item[(iii)] for each $t>0$, $X_t$ admits a density $p(t,x)$ with continuous second derivatives.
\end{itemize}
Let $\Phi$ be an nondecreasing (nonincreasing) function of $x$. Then
$$
f(x)=\Ex\left[e^{\int_0^T \Phi(X_s)ds} \bigg| X_T=x \right]
$$
is nondecreasing (nonincreasing) in $x$.
\end{lemma}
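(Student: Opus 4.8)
The plan is to eliminate the conditioning on the terminal value by reversing time, which converts the statement about a diffusion bridge into a statement about the dependence of an expectation on the \emph{initial} point of a diffusion; monotonicity in the initial point then follows from a one-dimensional comparison theorem. First I would apply the time-reversal formula for diffusions (the lemma referred to in Lemma~\ref{lem:exsj}, in the classical form of Haussmann--Pardoux): under hypotheses (i)--(iii) the process $\widetilde X_t := X_{T-t}$, $0\le t\le T$, is again an It\^o diffusion, solving
\[
d\widetilde X_t = a(T-t,\widetilde X_t)\,dB_t + \widetilde b(t,\widetilde X_t)\,dt, \qquad
\widetilde b(t,x) = -b(T-t,x) + \frac{1}{p(T-t,x)}\,\partial_x\big(a(T-t,x)^2\,p(T-t,x)\big),
\]
and, when started from a fixed point, it has exactly the law of the $X$-bridge read backwards. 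Since the substitution $s=T-t$ gives $\int_0^T\Phi(X_s)\,ds=\int_0^T\Phi(\widetilde X_t)\,dt$, conditioning on $\widetilde X_0=x$ produces the representation
\[
f(x)=\Ex\Big[\exp\Big(\int_0^T\Phi(\widetilde X_t)\,dt\Big)\ \Big|\ \widetilde X_0=x\Big].
\]

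Next I would exploit monotonicity of this expectation in the starting point. Because $a$, $b$, and $p$ are smooth and $X$ (hence $\widetilde X$) does not explode, the coefficients of the reversed SDE are locally Lipschitz on the region where $p>0$, so pathwise uniqueness holds there. Fixing $x\le y$ and solving the reversed SDE from $x$ and from $y$ driven by the same Brownian motion, the one-dimensional comparison theorem (e.g.\ Ikeda--Watanabe) yields $\widetilde X^x_t\le\widetilde X^y_t$ for all $t\in[0,T]$, almost surely. If $\Phi$ is nondecreasing, then pathwise $\int_0^T\Phi(\widetilde X^x_t)\,dt\le\int_0^T\Phi(\widetilde X^y_t)\,dt$, hence $\exp(\int_0^T\Phi(\widetilde X^x_t)\,dt)\le\exp(\int_0^T\Phi(\widetilde X^y_t)\,dt)$; taking expectations gives $f(x)\le f(y)$, i.e.\ $f$ is nondecreasing. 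If $\Phi$ is nonincreasing, every inequality reverses and $f$ is nonincreasing.

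The main obstacle is the justification of the time-reversal step itself. The reversed drift $\widetilde b$ contains the factor $1/p(T-t,x)$, which degenerates as $t\uparrow T$ (forward time $0$, where the law of $X_0$ is a point mass), so one must check that the reversed SDE is nonetheless well posed up to time $T$, that the solution converges to $X_0$, and that it genuinely realizes the bridge law. This is exactly what the regularity assumptions (i)--(iii) are imposed for, and it is handled by the standard time-reversal theory; granting Lemma~\ref{lem:exsj}, the rest of the argument is the routine coupling above. A minor secondary point is that the comparison theorem is applied with time-dependent, merely locally Lipschitz coefficients, which is permissible once non-explosion of $\widetilde X$ is known.
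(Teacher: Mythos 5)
Your proposal follows essentially the same route as the paper's proof: time reversal via Haussmann--Pardoux to convert the conditioning on $X_T$ into conditioning on the initial point of the reversed diffusion, followed by the one-dimensional comparison theorem (Ikeda--Watanabe) and the pathwise monotonicity of $\int_0^T\Phi(\cdot)\,dt$. The argument is correct; the paper handles the local-Lipschitz issue for the reversed coefficients by localization on the interior of the support, exactly as you indicate.
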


\begin{proof}
The proof is based on time reversal of diffusions. Let $Y_t = X_{T-t}$. Then $X_T=Y_0$ and $\int_0^T \Phi(X_s)ds=\int_0^T \Phi(Y_s)ds$, so
$$
f(x)=\Ex[e^{\int_0^T \Phi(Y_s)ds} \mid Y_0 =x].
$$
We wish to apply Theorem~2.1 in~\cite{Haussmann/Pardoux:1986} to obtain the dynamics of the time-reversed process $Y$. The smoothness of $p(t,x)$ and the local Lipschitz property of $a$ and $b$ (which is guaranteed by their smoothness), together with condition~$(ii)$, imply that the assumptions of that theorem are satisfied; see~\cite{Haussmann/Pardoux:1986}, Remark~2.2 and Section~3. This yields
\begin{equation} \label{eq:Lstdom21}
dY_t = \widetilde a(t,Y_t) d\widetilde B_t + \widetilde b(t,Y_t) dt,
\end{equation}
where
\begin{eqnarray}
\nonumber \widetilde b(t,x) &=& -b(T-t,x) + \frac{[a(T-t,x) p(T-t,x)]_x}{p(T-t,x)} \\
\widetilde a(t,x) &=& a(T-t,x),
\label{eq:bacoeff}
\end{eqnarray}
and $\widetilde B$ is Brownian motion. The smoothness of $p$, $a$ and $b$ implies that $\widetilde a$ and $\widetilde b$ are continuously differentiable on the interior of the support of $X$, and hence locally Lipschitz there. By localization we may assume they are globally Lipschitz, so that standard comparison theorems (see for instance~\cite{Ikeda/Watanabe:1977}) become available. Specifically, if $x_1\leq x_2$ lie in the support of $X_T$, and $Y^i$ denotes the solution to~(\ref{eq:Lstdom21}) started from $x_i$, we have $P(Y^1_t\leq Y^2_t, \ 0\leq t<T)=1$ and hence $f(x_1)\leq f(x_2)$ if $\Phi$ is nondecreasing. The case of nonincreasing $\Phi$ is deduced in the same manner.
\end{proof}

\begin{lemma} \label{L:stdom3}
Let $X$ be as in Lemma~\ref{L:stdom2}. Suppose $F_0,\ldots, F_n$ and $G_0,\ldots,G_n$ are all nondecreasing (resp.~all nonincreasing), nonnegative functions, and let $0\leq t_0\leq \cdots \leq t_n \leq T$. Then
$$
f(x) = \Ex\left[ \prod_{i=0}^n F_i(X_{t_i}) \bigg| X_T=x\right]
$$
is nondecreasing (resp.~nonincreasing), and we have
$$
\Ex\left[ \prod_{i=0}^n F_i(X_{t_i}) \prod_{i=0}^n G_i(X_{t_i}) \bigg| X_T\right] \geq
\Ex\left[ \prod_{i=0}^n F_i(X_{t_i}) \bigg| X_T\right]
\Ex\left[ \prod_{i=0}^n G_i(X_{t_i}) \bigg| X_T\right].
$$
\end{lemma}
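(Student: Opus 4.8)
The plan is to reuse the time-reversal device from the proof of Lemma~\ref{L:stdom2} and then add one extra ingredient, a correlation (FKG-type) inequality for monotone Markov chains. First I would pass to the time-reversed process $Y_t = X_{T-t}$. As in Lemma~\ref{L:stdom2} (via Haussmann--Pardoux, giving the SDE~(\ref{eq:Lstdom21}) with coefficients that are locally Lipschitz and may be taken globally Lipschitz after localization), conditioning on $X_T=x$ is the same as starting $Y$ from $Y_0=x$. Writing $s_j = T-t_{n-j}$, so that $0\le s_0\le\cdots\le s_n\le T$ and $X_{t_i}=Y_{s_{n-i}}$, we have
$$
\prod_{i=0}^n F_i(X_{t_i}) = \Phi(Y_{s_0},\ldots,Y_{s_n}), \qquad \Phi(z_0,\ldots,z_n)=\prod_{j=0}^n F_{n-j}(z_j),
$$
where $\Phi$ is nonnegative and, being a product of nonnegative nondecreasing functions, nondecreasing in each argument. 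Moreover $(Y_{s_0},\ldots,Y_{s_n})$ is a Markov chain whose one-step transition kernels are \emph{stochastically monotone}: the pathwise comparison theorem for~(\ref{eq:Lstdom21}) supplies, for $y\le y'$, solutions started at $(s_j,y)$ and $(s_j,y')$ with $Y_t\le Y'_t$ for all $t\ge s_j$, so the law of $Y_{s_{j+1}}$ started from $y$ is stochastically below that started from $y'$.

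Granting this, the first assertion is immediate: starting $Y$ from $x_1\le x_2$ and using the pathwise comparison, $\Phi(Y^1_{s_0},\ldots,Y^1_{s_n})\le\Phi(Y^2_{s_0},\ldots,Y^2_{s_n})$ surely, so $f$ is nondecreasing; if all $F_i$ are nonincreasing the same coupling gives the reverse inequality, so $f$ is nonincreasing.

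For the inequality I would prove the following general fact and then specialize it: \emph{if $Z_0,\ldots,Z_m$ is a real Markov chain whose transition kernels are stochastically monotone, then $\Ex[\Phi(Z)\Psi(Z)]\ge\Ex[\Phi(Z)]\,\Ex[\Psi(Z)]$ for all bounded, coordinatewise nondecreasing $\Phi,\Psi$.} The proof is by induction on $m$, the case $m=0$ being trivial. For the step, let $\bar\Phi(z)=\Ex[\Phi(Z)\mid Z_1=z]$ and likewise $\bar\Psi$; conditionally on $Z_1=z$ the chain $(Z_1,\ldots,Z_m)$ is again a monotone Markov chain started at $z$, so a monotone coupling shows $\bar\Phi,\bar\Psi$ are nondecreasing, and the induction hypothesis gives $\Ex[\Phi(Z)\Psi(Z)\mid Z_1]\ge\bar\Phi(Z_1)\bar\Psi(Z_1)$. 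Taking expectations,
$$
\Ex[\Phi(Z)\Psi(Z)] - \Ex[\Phi(Z)]\,\Ex[\Psi(Z)] \ge \Ex[\bar\Phi(Z_1)\bar\Psi(Z_1)] - \Ex[\bar\Phi(Z_1)]\,\Ex[\bar\Psi(Z_1)] \ge 0,
$$
the last inequality being the one-dimensional correlation inequality for two nondecreasing functions of the single variable $Z_1$. Applying this to the chain $(Y_{s_0},\ldots,Y_{s_n})$ with $\Phi$ the (reordered) product of the $F_i$ and $\Psi$ the product of the $G_i$ — both nonnegative and nondecreasing in each coordinate, hence legitimate, after truncating at level $N$ and letting $N\to\infty$ by monotone convergence if they are unbounded, which is harmless in every application in the paper — yields the stated inequality after undoing the time reversal. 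For nonincreasing $F_i,G_i$ one runs the argument for the chain $-Z$, under which nonincreasing functions become nondecreasing and covariances are unchanged.

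I expect the main obstacle to be the association (FKG) property for monotone Markov chains; within its inductive proof the two load-bearing substeps are the monotonicity of the conditional expectations $\bar\Phi,\bar\Psi$ — which rests on the comparison theorem already invoked for Lemma~\ref{L:stdom2} — and the reduction of the remaining term to the elementary one-dimensional correlation inequality. The monotonicity assertion of the lemma, by contrast, is essentially a verbatim repetition of the coupling argument from Lemma~\ref{L:stdom2}.
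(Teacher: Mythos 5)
Your argument is essentially the paper's: time reversal reduces the backward conditioning to a forward-started diffusion, monotonicity of $f$ follows from pathwise comparison, and the covariance inequality is the association (FKG) property of the resulting stochastically monotone Markov chain, proved by induction with the one-dimensional Chebyshev correlation inequality as base case --- precisely the route the paper takes via Lemmas A.3--A.4 of Cvitani\'c and Malamud. The only correction needed is in your induction step: condition on the \emph{first} coordinate $Z_0$ rather than on $Z_1$, because as written $\bar\Phi(z)=\Ex[\Phi(Z)\mid Z_1=z]$ involves the conditional law of $Z_0$ given $Z_1=z$ (the backward kernel), whose stochastic monotonicity in $z$ does not follow from forward monotonicity of the transition kernels alone; conditioning on $Z_0$ --- exactly as the paper conditions on $Y_{s_0}$ --- makes $\bar\Phi,\bar\Psi$ nondecreasing by your coupling argument and leaves the rest of the induction intact.
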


\begin{proof}
We treat the nondecreasing case, the other one being similar. Consider again the time-reversed process $Y_t = X_{T-t}$, and define time points $s_i=T-t_{n-i}$, $i=1,\ldots,n$ and functions $\widetilde F_i=F_{n-i}$, $\widetilde G_i=G_{n-i}$. Then $0\leq s_0\leq\cdots\leq s_n\leq T$, and we have
$$
f(x) = \Ex\left[ \prod_{i=0}^n \widetilde F_i(Y_{s_i}) \bigg| Y_0=x\right].
$$
The nondecreasing property of $f$ can now be deduced as in the proof of Lemma~\ref{L:stdom2}. Concerning the inequality, we are done if we can prove that
$$
\Ex\left[ \prod_{i=0}^n \widetilde F_i(Y_{s_i}) \prod_{i=0}^n \widetilde G_i(Y_{s_i}) \bigg| Y_s\right] \geq
\Ex\left[ \prod_{i=0}^n \widetilde F_i(Y_{s_i}) \bigg| Y_s\right]
\Ex\left[ \prod_{i=0}^n \widetilde G_i(Y_{s_i}) \bigg| Y_s\right]
$$
for any $s\leq s_0$ (take $s=0$ to recover the desired inequality.) This is achieved by induction similarly as in the proof of Lemma~A.4 in~\cite{Cvitanic/Malamud:2010}. Suppose the inequality holds for $n-1$, $n-2$, etc. Then by the Markov property of $Y$ and the induction hypothesis,
\begin{align*}
\Ex\left[ \prod_{i=0}^n \widetilde F_i(Y_{s_i}) \prod_{i=0}^n \widetilde G_i(Y_{s_i}) \bigg| Y_s\right]
&= \Ex\left[ \widetilde F_0(Y_{s_0}) \widetilde G_0(Y_{s_0}) \Ex\left[ \prod_{i=1}^n \widetilde F_i(Y_{s_i}) \prod_{i=1}^n \widetilde G_i(Y_{s_i}) \bigg| Y_{s_0} \right] \bigg| Y_s\right] \\
&\geq \Ex\left[ \widetilde F_0(Y_{s_0}) F(Y_{s_0}) \widetilde G_0(Y_{s_0})  G(Y_{s_0}) \bigg| Y_s\right],
\end{align*}
where $F(x)= E\big[ \prod_{i=1}^n \widetilde F_i(Y_{s_i}) \mid Y_{s_0} = x \big]$ and $G(x)= E\big[ \prod_{i=1}^n \widetilde G_i(Y_{s_i}) \mid Y_{s_0} = x \big]$. These functions are nondecreasing by the first part of the lemma, so an application of the induction hypothesis with $n=0$ yields
\begin{align*}
\Ex\left[ \prod_{i=0}^n \widetilde F_i(Y_{s_i}) \prod_{i=0}^n \widetilde G_i(Y_{s_i}) \bigg| Y_s\right]
&\geq \Ex\left[ \widetilde F_0(Y_{s_0}) F(Y_{s_0}) \bigg| Y_s\right] \Ex\left[ \widetilde  G_0(Y_{s_0})  G(Y_{s_0}) \bigg| Y_s\right] \\
&= \Ex\left[ \prod_{i=0}^n \widetilde F_i(Y_{s_i}) \bigg| Y_s\right] \Ex\left[ \prod_{i=0}^n \widetilde G_i(Y_{s_i}) \bigg| Y_s\right],
\end{align*}
as desired. It remains to establish the case $n=0$; but this follows immediately from Lemma~A.3 in~\cite{Cvitanic/Malamud:2010}.
\end{proof}

\begin{lemma} \label{L:stdom4}
Let $X$ be as in Lemma~\ref{L:stdom2}. Let $\Phi$ and $\Psi$ be nondecreasing (resp. nonincreasing) functions. Then
$$
\Ex\left[e^{\int_0^T \Phi(X_s)ds + \int_0^T \Psi(X_s)ds}  \bigg| X_T \right] \geq
\Ex\left[e^{\int_0^T \Phi(X_s)ds} \bigg| X_T \right] \Ex\left[ e^{\int_0^T \Psi(X_s)ds}  \bigg| X_T \right].
$$
\end{lemma}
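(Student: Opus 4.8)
The plan is to derive the inequality from Lemma~\ref{L:stdom3} by approximating the time integrals $\int_0^T\Phi(X_s)\,ds$ and $\int_0^T\Psi(X_s)\,ds$ by left-endpoint Riemann sums, for which the exponential of the sum factorizes into a product of monotone functions of $X$ evaluated at finitely many times. Two preliminary observations are needed. First, replacing $\Phi$ by $\Phi\wedge n$ and $\Psi$ by $\Psi\wedge n$ preserves monotonicity, makes the functions bounded above, and increases them pointwise to $\Phi$ and $\Psi$, so by the conditional monotone convergence theorem (together with the fact that a product of increasing $[0,\infty]$-valued sequences increases to the product of the limits) the inequality for $(\Phi\wedge n,\Psi\wedge n)$ passes to $(\Phi,\Psi)$; since the paths of $X$ remain in a compact interval on $[0,T]$, on which a monotone function is bounded, it suffices to treat bounded $\Phi$ and $\Psi$. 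Second, almost surely $s\mapsto\Phi(X_s)$ and $s\mapsto\Psi(X_s)$ are Riemann integrable on $[0,T]$: a monotone function has at most countably many discontinuities, and for each such point $d$ the occupation time $\int_0^T\1{X_s=d}\,ds$ has zero expectation because $X_s$ admits a density for every $s>0$ by hypothesis~(iii) of Lemma~\ref{L:stdom2}.

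Next I would fix a sequence of partitions $0=t^{(m)}_0<\cdots<t^{(m)}_{n_m}=T$ with mesh tending to $0$, put $\Delta^{(m)}_i=t^{(m)}_{i+1}-t^{(m)}_i$, and set $F^{(m)}_i(x)=\exp(\Phi(x)\Delta^{(m)}_i)$ and $G^{(m)}_i(x)=\exp(\Psi(x)\Delta^{(m)}_i)$. Since $\Delta^{(m)}_i>0$ and $\Phi,\Psi$ are monotone of the same type, these functions are nonnegative and all nondecreasing (resp.\ all nonincreasing), so Lemma~\ref{L:stdom3}, applied with these functions at the time points $t^{(m)}_0\le\cdots\le t^{(m)}_{n_m-1}$, yields
\[
\Ex[e^{R^{(m)}+S^{(m)}}\mid X_T]\;\geq\;\Ex[e^{R^{(m)}}\mid X_T]\,\Ex[e^{S^{(m)}}\mid X_T],
\]
where $R^{(m)}=\sum_i\Phi(X_{t^{(m)}_i})\Delta^{(m)}_i$ and $S^{(m)}=\sum_i\Psi(X_{t^{(m)}_i})\Delta^{(m)}_i$ are the associated Riemann sums.

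Finally I would let $m\to\infty$. By the Riemann integrability recorded above, $R^{(m)}\to\int_0^T\Phi(X_s)\,ds$ and $S^{(m)}\to\int_0^T\Psi(X_s)\,ds$ almost surely, and since $\Phi$ and $\Psi$ are bounded the exponentials $e^{R^{(m)}}$, $e^{S^{(m)}}$ and $e^{R^{(m)}+S^{(m)}}$ are bounded uniformly in $m$; the conditional dominated convergence theorem then gives convergence of all three conditional expectations, and passing to the limit in the displayed inequality proves the lemma.

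The main obstacle is the limiting argument rather than any genuinely new inequality, since the combinatorial content lies entirely in Lemma~\ref{L:stdom3}. The delicate point is the Riemann-integrability step: $\Phi$ is assumed only monotone, not continuous, so $s\mapsto\Phi(X_s)$ need not be continuous, and it is precisely the existence of a density for $X_s$ (hypothesis~(iii) of Lemma~\ref{L:stdom2}) that prevents $X$ from spending positive time on the countable set where $\Phi$ jumps. One should also verify that the monotone-truncation reduction respects the product structure on the right-hand side, which it does because the product of two pointwise-increasing nonnegative sequences increases to the product of their limits.
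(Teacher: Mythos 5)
Your proposal is correct and follows essentially the same route as the paper: the paper's proof is the one-line instruction to approximate the exponentials of the time integrals by products of monotone functions of $X$ at finitely many times, apply Lemma~\ref{L:stdom3}, and pass to the limit. You have simply filled in the limiting details (truncation, almost-sure Riemann integrability of $s\mapsto\Phi(X_s)$ via the density hypothesis, and dominated/monotone convergence) that the paper leaves implicit.
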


\begin{proof}
Approximate $e^{\int_0^T \Phi(X_s)ds}$ and $e^{\int_0^T \Psi(X_s)ds}$ from below using functions of the form $F_0(X_{t_0})\cdots F_n(X_{t_n})$, then apply Lemma~\ref{L:stdom3} and monotone convergence.
\end{proof}

\begin{lemma} \label{L:stdom10}
Let $f$, $g$, $G$, $h$ and $H$ be measurable functions and define
$$
\psi(x,y) = g(x)G(y) - h(x)H(y).
$$
If $f(x)$ and $\psi(x,y)+\psi(y,x)$ are nonnegative for all $x$ and $y$, then
\begin{equation}\label{eq:stdom10}
\Ex\Big[f(X)g(X)\Big] \Ex\Big[f(X)G(X)\Big] - \Ex\Big[f(X)h(X)\Big]\Ex\Big[f(X)H(X)\Big] \geq 0
\end{equation}
for every random variable $X$ for which the left side is well-defined. If $f(x)>0$ and $\psi(x,y)+\psi(y,x)>0$ for $x\neq y$, and $X$ has no atoms, the inequality is strict.
\end{lemma}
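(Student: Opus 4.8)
The plan is to use the standard ``two independent copies'' device. Introduce (on a possibly enlarged probability space) a random variable $Y$ that is an independent copy of $X$. The key first step is to rewrite each product of expectations as a single expectation over the pair $(X,Y)$. Since $f(X)g(X)$ and $f(Y)G(Y)$ are independent, and each factor is integrable (this is exactly what is guaranteed by the phrase ``for every random variable $X$ for which the left side is well-defined''), Fubini's theorem gives
$$
\Ex[f(X)g(X)]\,\Ex[f(X)G(X)] = \Ex\big[f(X)f(Y)\,g(X)G(Y)\big],
$$
and likewise $\Ex[f(X)h(X)]\,\Ex[f(X)H(X)] = \Ex\big[f(X)f(Y)\,h(X)H(Y)\big]$. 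Subtracting, the left-hand side of~(\ref{eq:stdom10}) equals $\Ex\big[f(X)f(Y)\,\psi(X,Y)\big]$.

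Next I would symmetrize. Because $X$ and $Y$ are i.i.d., interchanging their roles shows that the same quantity also equals $\Ex\big[f(X)f(Y)\,\psi(Y,X)\big]$; averaging the two representations yields
$$
\Ex[f(X)g(X)]\,\Ex[f(X)G(X)] - \Ex[f(X)h(X)]\,\Ex[f(X)H(X)] = \tfrac{1}{2}\,\Ex\big[f(X)f(Y)\,(\psi(X,Y)+\psi(Y,X))\big].
$$
At this point the hypotheses apply directly: $f\geq 0$ makes $f(X)f(Y)\geq 0$, and $\psi(x,y)+\psi(y,x)\geq 0$ for all $x,y$ makes the bracketed integrand nonnegative, so the right-hand side is $\geq 0$, which is precisely~(\ref{eq:stdom10}). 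For the strict statement, under the stronger hypotheses $f>0$ everywhere and $\psi(x,y)+\psi(y,x)>0$ for $x\neq y$, the integrand $f(X)f(Y)(\psi(X,Y)+\psi(Y,X))$ is strictly positive on $\{X\neq Y\}$; since $X$ has no atoms and $Y$ is an independent copy, $\Px(X=Y)=0$, so the integrand is strictly positive almost surely and hence has strictly positive expectation.

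I do not anticipate any real obstacle here. The only point deserving a line of care is the appeal to Fubini/independence in the first step, namely that the product of the expectations equals the expectation of the product over the product space; this requires the integrability that is built into the hypothesis that the left side of~(\ref{eq:stdom10}) be well-defined, so it causes no difficulty. (If one only wants to track the non-strict inequality in the extended sense, one can also argue by monotone-class / truncation arguments, but for finite expectations the above suffices.)
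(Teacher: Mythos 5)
Your proof is correct and follows essentially the same route as the paper: introduce an independent copy of $X$, rewrite the products of expectations as a single expectation via independence, symmetrize using exchangeability, and conclude from the nonnegativity of $f(X)f(Y)\bigl(\psi(X,Y)+\psi(Y,X)\bigr)$. Your treatment of the strict-inequality case (observing that $\Px(X=Y)=0$ when $X$ has no atoms) just spells out what the paper calls ``immediate.''
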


\begin{proof}
Let $\widehat X$ be an independent copy of $X$. The left side of~(\ref{eq:stdom10}) then equals
$$
\Ex\Big[f(X)g(X)f(\widehat X)G(\widehat X) - f(X)h(X)f(\widehat X)H(\widehat X)\Big] = \Ex\left[ f(X)f(\widehat X) \psi(X,\widehat X) \right],
$$
and since $X$ and $\widehat X$ are exchangeable, it is also equal to
$$
\Ex\left[ f(X)f(\widehat X) \psi(\widehat X,X) \right]
$$
Adding the two expressions yields
$$
\Ex\Big[f(X)f(\widehat X) \Big( \psi(X,\widehat X) + \psi(\widehat X,X) \Big) \Big],
$$
which is nonnegative due to the assumptions on $f$ and $\psi$. The statement concerning strict inequality is immediate.
\end{proof}

\begin{lemma} \label{lem:exsj}
Assume that the dividend process is a geometric Brownian motion, i.e. $dD_t = \mu D_t dt + \sigma D_t dW_t$, $D_0>0$. Then
\begin{equation} \label{eq:expgphi}
\Ex\left[ e^{-\int_0^T \lambda(D_u) du} \bigg| D_T = x \right] = \Ex\left[ e^{-\int_0^T \lambda(\tilde{D}_u) du} \bigg| \tilde{D}_0 = x \right],
\end{equation}
where the process $\widetilde{D}$ satisfies the SDE $d\widetilde{D}_t = \widetilde{\mu}(t,\widetilde{D}_t) dt + \sigma \widetilde{D}_t dW_t$, with
\begin{eqnarray}
\nonumber \widetilde{\mu}(t,x) &=& -\mu x + \frac{1}{\sigma} \left(\mu - \frac{1}{2} \sigma^2 - \frac{\log(x/D_0)}{T-t}\right) \\
\tilde{\sigma}(t,x) &=& \sigma x
\label{eq:tildegbm}
\end{eqnarray}
\end{lemma}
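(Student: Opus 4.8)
The plan is to mirror the time-reversal argument in the proof of Lemma~\ref{L:stdom2}. Put $Y_s := D_{T-s}$ for $0\le s\le T$. Substituting $u = T-s$ gives $\int_0^T\lambda(D_u)\,du = \int_0^T\lambda(Y_s)\,ds$, while $Y_0 = D_T$; hence the left side of~\eqref{eq:expgphi} equals $\Ex\big[e^{-\int_0^T\lambda(Y_s)\,ds}\mid Y_0 = x\big]$. So it suffices to show that $Y$ is a diffusion with the dynamics attributed to $\widetilde D$ in~\eqref{eq:tildegbm}: conditioning on $\{D_T = x\} = \{Y_0 = x\}$ then rewrites that expectation as $\Ex\big[e^{-\int_0^T\lambda(\widetilde D_u)\,du}\mid\widetilde D_0 = x\big]$, which is the claim.

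To obtain the dynamics of $Y$, I would apply the time-reversal theorem (Theorem~2.1 in~\cite{Haussmann/Pardoux:1986}), exactly as in Lemma~\ref{L:stdom2}. The dividend SDE has time-homogeneous, infinitely differentiable coefficients $b(x) = \mu x$, $a(x) = \sigma x$, the geometric Brownian motion does not explode, and for each $t>0$ its marginal law is lognormal with density
$$
p(t,x) = \frac{1}{\sigma x\sqrt{2\pi t}}\,\exp\!\left(-\frac{\big(\log(x/D_0) - (\mu - \tfrac12\sigma^2)t\big)^2}{2\sigma^2 t}\right),
$$
which is smooth in $x$ on $(0,\infty)$, so the hypotheses of the theorem are met. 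By~\eqref{eq:bacoeff} the reversed process satisfies $dY_s = \widetilde a(s,Y_s)\,d\widetilde W_s + \widetilde b(s,Y_s)\,ds$ with $\widetilde a(s,x) = a(x) = \sigma x$ and $\widetilde b(s,x) = -\mu x + [a(\cdot)\,p(T-s,\cdot)]_x(x)/p(T-s,x)$.

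The computation of $\widetilde b$ is then a one-line differentiation. Writing $q(t,x) = \big(\log(x/D_0) - (\mu - \tfrac12\sigma^2)t\big)^2/(2\sigma^2 t)$, we have $a(x)p(t,x) = (2\pi t)^{-1/2}e^{-q(t,x)}$, so $[a p]_x/p = -a(x)\,q_x(t,x) = -\big(\log(x/D_0) - (\mu - \tfrac12\sigma^2)t\big)/(\sigma t)$. Evaluating at $t = T-s$ and collecting terms gives $\widetilde b(s,x) = -\mu x + \tfrac1\sigma\big(\mu - \tfrac12\sigma^2 - \log(x/D_0)/(T-s)\big)$, which is precisely $\widetilde\mu(s,x)$ from~\eqref{eq:tildegbm}, and $\widetilde a(s,x) = \sigma x$ matches $\widetilde\sigma$. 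This establishes~\eqref{eq:expgphi}.

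The point needing the most care is the terminal time. The reversed drift has a $(T-s)^{-1}$ singularity as $s\uparrow T$, so one must argue that the reversed SDE is well posed on $[0,T)$, that solutions extend continuously to $s=T$, and that in fact $Y_T = D_0$ (the drift pulls $Y$ toward $D_0$, consistently with $Y_s = D_{T-s}$ and $D_0$ deterministic) --- the familiar state of affairs for bridge-type equations. One should also record weak uniqueness for the reversed SDE, so that the conditional law of $Y$ given $Y_0 = x$ may legitimately be identified with the law of $\widetilde D$ started at $x$, and verify the Haussmann--Pardoux hypotheses (continuity of $p$ and its $x$-derivatives away from $t=0$, non-explosion); these last checks are routine, being the same ones already performed for Lemma~\ref{L:stdom2}.
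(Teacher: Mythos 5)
Your proposal follows the paper's proof essentially verbatim: the same time reversal $\widetilde D_t = D_{T-t}$ with the change of variable in the integral, the same appeal to Theorem~2.1 of \cite{Haussmann/Pardoux:1986} justified by the lognormal density, non-explosion, and local Lipschitz coefficients, and the same use of~(\ref{eq:bacoeff}) to read off $\widetilde\mu$ and $\widetilde\sigma$. If anything you are more careful than the paper, since you carry out the differentiation of $[a\,p]_x/p$ explicitly and flag the $(T-s)^{-1}$ singularity of the reversed drift at $s=T$, which the paper's proof passes over in silence.
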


\begin{proof}
Define $\widetilde{D}_t = D_{T-t}$. Then $D_T = \widetilde{D}_0$ and $\int_0^T \lambda(\widetilde{D}_u) = \int_0^T \lambda(D_u) du$. Therefore, Eq.~(\ref{eq:expgphi}) holds. The smoothness of the transition density of $D_t$ given by
$$
p(t,y) = \frac{1}{y \sigma \sqrt{2 \pi t}} e^{- \frac{\left( \log(y/D_0) - \left(\mu - 0.5 \sigma^2 \right) t \right)^2}{2 \sigma^2 t}}
$$
along with the local Lipschitz property of $\mu x$ and $\sigma x$, and the fact that the geometric Brownian motion is nonexplosive, allow applying Theorem~2.1 in~\cite{Haussmann/Pardoux:1986}. Using Eq.~(\ref{eq:bacoeff}), we obtain the expressions in Eq.~(\ref{eq:tildegbm}).
\end{proof}

\bibliographystyle{plainnat}
\bibliography{bibl}

\end{document}